\title{Robust Positivity Problems for Linear Recurrence Sequences}
\titlerunning{Robust Positivity for LRS}
\author{Mihir Vahanwala}{Max Planck Institute for Software Systems, Saarland Informatics Campus, Germany}{mvahanwa@mpi-sws.org}{https://orcid.org/0009-0008-5709-899X}{}
\authorrunning{M.\ Vahanwala} 
\keywords{Dynamical Systems, Verification, Robustness, Linear Recurrence Sequences, Positivity, Ultimate Positivity}
\newcommand{\rationals}{\mathbb{Q}}
\newcommand{\naturals}{\mathbb{N}}
\newcommand{\integers}{\mathbb{Z}}
\newcommand{\reals}{\mathbb{R}}
\newcommand{\complexes}{\mathbb{C}}
\newcommand{\algebraics}{\overline{\mathbb{Q}}}
\newcommand{\realalgebraics}{\mathbb{A}}
\newcommand{\ball}{\mathcal{B}}
\newcommand{\seq}[1]{\langle#1\rangle}
\newtheorem{problem}{Problem}
\begin{document}
\maketitle
\begin{abstract}
Linear Recurrence Sequences (LRS) are a fundamental mathematical primitive for a plethora of applications such as the verification of probabilistic systems, model checking, computational biology, and economics. Positivity (are all terms of the given LRS non-negative?) and Ultimate Positivity (are all but finitely many terms of the given LRS non-negative?) are important open number-theoretic decision problems. Recently, the robust versions of these problems, that ask whether the LRS is (Ultimately) Positive despite small perturbations to its initialisation, have gained attention as a means to model the imprecision that arises in practical settings. However, the state of the art is ill-equipped to reason about imprecision when its extent is explicitly specified. In this paper, we consider Robust Positivity and Ultimate Positivity problems where the neighbourhood of the initialisation, expressed in a natural and general format, is also part of the input. We contribute by proving sharp decidability results: decision procedures at orders our techniques are unable to handle for general LRS would entail significant number-theoretic breakthroughs.
\end{abstract}
\section{Introduction}
\label{section:intro}
A real Linear Recurrence Sequence (LRS) of order $\kappa$ is an infinite sequence of real numbers $(u_0, u_1, u_2, \dots)$ having the following property: there exist $\kappa$ real constants $a_{0}, \dots, a_{\kappa-1}$, with $a_0 \ne 0$ such that for all $n \ge 0$:
\begin{equation}
u_{n+\kappa} = a_{\kappa-1}u_{n+\kappa-1} + \dots a_0 u_n.
\end{equation}
The constants $a_0, \dots, a_{\kappa-1}$ define the linear recurrence relation $\mathbf{a}$; they are also associated with the characteristic polynomial
$
X^\kappa - a_{\kappa-1}X^{\kappa-1} - \dots - a_1X - a_0.
$ 
The initial terms $u_0, \dots, u_{\kappa-1}$ are collectively denoted as the initialisation $\mathbf{c}$. An LRS is uniquely specified by $(\mathbf{a}, \mathbf{c})$. The best-known example is the Fibonacci sequence $\seq{0, 1, 1, 2, 3, 5, 8, \dots}$, satisfying the recurrence relation $u_{n+2} = u_{n+1} + u_n$: it is named after Leonardo of Pisa, who used it to model the population growth of rabbits. LRS have been extensively studied, and found several mathematical and scientific applications since. The monograph of Everest \textit{et al.} \cite{Everest2003RecurrenceS} is a comprehensive treatise on the mathematical aspects of Recurrence Sequences.

Important number-theoretic decision problems for Linear Recurrence Sequences include Positivity (is $u_n \ge 0$ for all $n$?), Ultimate Positivity (is $u_n \ge 0$ for all but finitely many $n$?) and the closely related Skolem Problem (is $u_n = 0$ for some $n$?). We remark that a Positive LRS is necessarily Ultimately Positive. These problems have applications in software verification, probabilistic model checking, discrete dynamic systems, theoretical biology, and economics. Decidability has been open for decades, with breakthroughs in restricted settings: Mignotte \textit{et al.\ }\cite{mignotte} and Vereshchagin \cite{vereshchagin} independently proved the Skolem Problem to be decidable up to order $4$. Ouaknine and Worrell \cite{joeljames3} showed Positivity and Ultimate Positivity are decidable up to order $5$ but number-theoretically hard at order $6$. For \textit{simple} LRS (those whose characteristic polynomials have no repeated roots), they showed that Positivity is decidable up to order $9$ \cite{ouaknine2014positivity} and Ultimate Positivity is decidable at all orders \cite{ouaknine2014ultimate}. These results were originally proven for LRS specified by \textit{rational} recurrences and initialisations, but can be generalised to real algebraic input as well. In this paper, we focus on Positivity and Ultimate Positivity for \textbf{sequences defined by real algebraic input}.

In contrast, the \textit{uninitialised} variants of these problems are far more tractable. Braverman \cite{Braverman06} and Tiwari \cite{Tiwari04} consider whether \textit{every} possible initialisation keeps the sequence Positive, and decide so in $\mathsf{PTIME}$. More recently, this result has been extended to processes with choices \cite{AGV18}. We argue that practical applications need a middle ground: recurrence relations that arise in practice need to be contextualised by actual instances of sequences; however, considering \textit{precise} initialisations does not account for inherently imprecise real world measurements, and the requirement of safety margins. We thus study robust variants: given a recurrence and an initialisation, do all initialisations in a neighbourhood satisfy (Ultimate) Positivity?

\paragraph*{Related Work} 
In this paper, we focus on the neighbourhood-of-initialisation notion of robustness, which was first introduced in \cite{originalstacs}, and more comprehensively treated in \cite{originalarxiv}.  Works with a more control-theoretic flavour include \cite{rounding20}, which allows for rounding at every step before applying the recurrence; in the same vein, \cite{pseudo21} allows for $\varepsilon$-disturbances at every step of the sequence. Our notion of robustness has been considered in \cite{originalstacs,originalarxiv,pseudo21}, however, these works primarily concern themselves with simply deciding whether there \textit{exists} a neighbourhood around the given point that satisfies Positivity, or whether there \textit{exists} a tolerance $\varepsilon$ such that the sequence avoids a region despite $\varepsilon$-disturbances at every step. Although they do identify that robust problems are hard when the neighbourhood is given as input, in the absence of decidability results, their hardness results are not sharp. 

There are, of course, broader approaches to model and reason about imprecision: \cite{N21} considers a model of computation that can take arbitrary real numbers as input, thereby allowing imprecision in both the initialisation and the recurrence. Even in this setting, the focus is on whether the decision is locally constant in \textit{some} neighbourhood of the given instance of the Positivity Problem, as opposed to whether the decision holds for an entire \textit{given} neighbourhood.

\paragraph*{Our contribution}
We address the gap in the robustness state of the art by exploring the frontiers of decidability when the neighbourhood is given as input. Concretely, our input consists of a linear recurrence relation $\mathbf{a}$ and a neighbourhood of initialisations centred around $\mathbf{c}$. Our problem is to decide whether all initialisations in the \textbf{given} neighbourhood result in (Ultimately) Positive sequences. 

When neighbourhoods are expressly given as input, their geometry plays a critical role in the decision procedure. The notion of neighbourhoods that we primarily focus on is based on the $\ell^2$-norm. We seek to slightly generalise the Euclidean $\varepsilon$-ball. More specifically, we use the Mahalanobis distance to define neighbourhoods. Our parameter is the positive definite matrix $\mathbf{S}$, and the neighbourhood of $\mathbf{c}$ it specifies is the set of all points $\mathbf{c'} \in \reals^{\kappa}$ such that $(\mathbf{c'} - \mathbf{c})^T\mathbf{S}(\mathbf{c'} - \mathbf{c}) \le 1$. The size of neighbourhoods is usually parametrised by an $\varepsilon$: in our case, we can account for it by simply scaling $\mathbf{S}$. In the statistical context, $\mathbf{S}$ is the inverse of a covariance matrix; and thus, our formulation is a rather natural way of capturing noise and measurement errors in the input, whose components may often be correlated. Our \textbf{novelty}, to the best of our knowledge, lies in identifying a general and practical way of explicitly specifying neighbourhoods, and establishing the \textbf{first decidability results} in such a setting, albeit at low orders or subject to spectral constraints.

As first discussed in \cite[Section 5]{joeljames3}, solving decision problems on Linear Recurrence Sequences in full generality is an endeavour fraught with number-theoretic hardness. Decision procedures for Positivity problems for LRS of higher order would allow number theorists to compute properties of irrational numbers that are considered inaccessible to contemporary techniques. These include the Diophantine approximation type, which intuitively describes the quality of the ``best'' rational approximation of a given irrational number, and the Lagrange constant, which intuitively describes how well increasingly precise rational approximations of a given irrational number converge. We justify the inability of our techniques to handle LRS of higher orders by \textbf{reducing the computation of Diophantine approximation types and Lagrange constants} to robust Positivity problems for LRS of lower orders than ever before.

\begin{table}[H]
\begin{tabular}{|l|r|r|r|}
  \hline
  & \multicolumn{2}{c|}{\bf Decidability Proof} &  \\
  \hline
   \textbf{Problem:} $\mathbf{S}$-\textbf{Robust}& \textbf{General}& {\bf Simple} &{\bf Hardness} \\
  \hline
  Positivity & order $\le 4$ & order $\le 5$ &Diophantine hard at order 5\\
  Uniform Ultimate Positivity & order $\le 4$ & \textbf{all orders} & Lagrange hard at order 5 \\
  Non-uniform Ultimate Positivity & order $\le 4$ & order $\le 4$ & \cite{joeljames3,originalarxiv}: Lagrange hard at order 6 
  \\
  \hline
\end{tabular}
\caption{Main results, summarised. The distinction between uniform and non-uniform refers to whether the threshold index for certifying Ultimate Positivity must be common for the entire neighbourhood.}%
\end{table}

\paragraph*{Structure of the paper}
The exponential polynomial closed form is an invaluable tool in the study of LRS, and we devote \S\ref{section:solspace} to its exposition. This equips us to introduce our Robust Positivity Problems and intuit their decidability proofs in \S\ref{section:problems}. Linear Recurrences and Diophantine Approximation are intrinsically connected: number-theoretic results form the basis of decision procedures; open problems are a yardstick for hardness reductions. We survey the number theory relevant to us in \S\ref{section:diophantine}. We then prove our decidability results in the technical \S\ref{section:decidability} and \S\ref{section:decidability2}, and present our hardness reduction in \S\ref{section:hardness}. We provide concluding perspective in \S\ref{section:perspective}. We refer the reader to Appendix \ref{appendix:prelims} for a summary of the standard notation and prerequisites we use.

\section{The exponential polynomial closed form}
\label{section:solspace}
We begin by discussing the exponential polynomial closed form, a perspective that is routinely leveraged to study the behaviour of Linear Recurrence Sequences. Simple LRS (no repeated characteristic roots) have the closed form
\begin{equation}
\label{eq:exppoly}
u_n = \sum_j w_j \rho_j^n + \sum_j (z_j \gamma_j^n + \bar{z_j}\bar{\gamma_j}^n)
\end{equation}
where each $\rho_j, \gamma_j, \bar{\gamma_j}$ are distinct roots of the characteristic polynomial. By straightforward arithmetic on the above expression, we can see that if $(u_n)_{n\in \naturals}, (v_n)_{n\in \naturals}$ are simple LRS with sets of characteristic roots $U$ and $V$ respectively, then
\begin{itemize}
\item $r_n = u_n + v_n$ is a simple LRS, whose set of roots is $U \cup V$.
\item $r_n = u_n \cdot v_n$ is a simple LRS, whose set of roots is $\{\gamma_1\gamma_2: \gamma_1 \in U, \gamma_2 \in V\}$.
\end{itemize}

In general, one can encode a linear recurrence $\mathbf{a}$ as a $\kappa \times \kappa$ companion matrix $\mathbf{A}$, and interpret the initialisation $\mathbf{c}$ as a vector. Then, $u_n$ is given by the first coordinate of $\mathbf{A}^n\mathbf{c}$, i.e.
\begin{equation}
\label{eq:companion}
\begin{bmatrix}
u_n \\
u_{n+1} \\
\vdots \\
u_{n+\kappa-1}
\end{bmatrix} 
= 
\begin{bmatrix}
0 & 1 & 0 & \dots & 0 \\
0 & 0 & 1 & \dots & 0 \\
\vdots & \vdots & \vdots & \dots & \vdots \\
a_0 & a_1 & a_2 & \dots & a_{\kappa-1}
\end{bmatrix}^n
\begin{bmatrix}
u_0 \\
u_{1} \\
\vdots \\
u_{\kappa-1}
\end{bmatrix}.
\end{equation}
Let $\mathbf{e_1}^T$ denote the row vector $\begin{bmatrix}1 & 0 & \dots & 0\end{bmatrix}$. We can thus write $u_n = \mathbf{e_1}^T\mathbf{A}^n\mathbf{c}$. It is now a standard fact that LRS have the following \textbf{real exponential polynomial} closed form 
\begin{equation}
\label{eq:realexppoly}
u_n = \left(\sum_{j=1}^{k_1}\sum_{\ell = 0}^{m_j-1} z_{j\ell}\rho_j^n n^\ell\right) + \left(\sum_{j=k_1 + 1}^{k_2} \sum_{\ell = 0}^{m_j-1} (x_{j\ell} \cos n\theta_j + y_{j\ell}\sin n\theta_j)\rho_j^n n^\ell\right)
\end{equation}
where $\rho_j$ (alternately, $\rho_j e^{i\theta_j}$) are roots of the characteristic polynomial defined by $\mathbf{a}$, each with multiplicity $m_j$. The coefficients $z_{j\ell}, x_{j\ell}, y_{j\ell}$ each depend linearly on $\mathbf{c}$. $u_n$ can thus be equivalently expressed as the inner (dot) product $\seq{\mathbf{p}, \mathbf{q_n}}$ where $\{\mathbf{q_n}\}_{n\in \naturals}$ is the sequence of vectors of terms that occur in the exponential polynomial expression, and $\mathbf{p}$ is the vector of corresponding coefficients. The choice of $\{\mathbf{q_n}\}_{n\in \naturals}$ can differ in ``phase'': one can replace $\cos n\theta, \sin n\theta$ by $\cos (n\theta-\varphi), \sin(n\theta-\varphi)$ for some choice of $\varphi$, and adjust the corresponding coefficients in $\mathbf{p}$ accordingly.

Roots such that $|\rho_j|$ is the largest are called \textbf{dominant}. The growth rate of a term in the above expression is governed by $\rho_j^n n^\ell$. Terms with the fastest growth are called \textbf{dominant terms}, and they drive the asymptotic behaviour of the LRS. A standard, intuitive prerequisite for Ultimate Positivity is that the leading terms in the exponential polynomial expression must include one that is real and strictly positive, otherwise their dominant contribution oscillates between positive and negative. It is formalised by applying \cite[Lemma 4]{Braverman06} to the dominant terms in expression \ref{eq:realexppoly} and arguing that the contribution from the remaining terms vanishes asymptotically. 
\begin{proposition}
\label{prop:folklore}
If the characteristic polynomial has no real dominant root of maximum multiplicity, then in any full-dimensional neighbourhood of initialisations, there exists an initialisation, such that the sequence has infinitely many positive terms, and infinitely many negative terms.
\end{proposition}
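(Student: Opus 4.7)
The plan is to isolate the asymptotically dominant part of the exponential polynomial \eqref{eq:realexppoly} and force it to change sign infinitely often. Let $\rho := \max_j |\rho_j|$ and let $m$ be the largest multiplicity attained by characteristic roots of modulus $\rho$. By hypothesis, every root achieving both $|\rho_j| = \rho$ and $m_j = m$ is non-real, so the only contribution to $u_n$ of order $\rho^n n^{m-1}$ comes from complex-conjugate pairs $\rho e^{\pm i\theta_j}$ (with $\theta_j \in (0,\pi)$) indexed by some nonempty set $D$. Denoting the corresponding real coefficients by $x_j, y_j$, I would first verify
\[
\frac{u_n}{\rho^n n^{m-1}} \;=\; g_n \;+\; o(1), \qquad g_n \;:=\; \sum_{j \in D}\bigl(x_j \cos n\theta_j + y_j \sin n\theta_j\bigr),
\]
since every other summand in \eqref{eq:realexppoly} is either $O(|\rho'|^n n^{m-1})$ with $|\rho'| < \rho$ or $O(\rho^n n^{\ell})$ with $\ell < m-1$.

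Next, I would exploit linearity of the exponential-polynomial coefficients in $\mathbf{c}$: the map from $\reals^{\kappa}$ to the full tuple of coefficients in \eqref{eq:realexppoly} is an invertible linear change of variables, so the set of initialisations for which every $x_j, y_j$ ($j \in D$) vanishes is a proper linear subspace of $\reals^{\kappa}$. Any full-dimensional neighbourhood must contain a point outside this subspace, so we may select an initialisation in it for which $g_n$ is a nontrivial trigonometric polynomial. Distinct $\theta_j \in (0, \pi)$ render $\{\cos n\theta_j, \sin n\theta_j\}_{j \in D}$ linearly independent as functions of $n \in \naturals$, so ``some $x_j$ or $y_j$ nonzero'' genuinely implies $g_n \not\equiv 0$.

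For such a choice, I would invoke \cite[Lemma 4]{Braverman06}, whose proof rests on Kronecker--Weyl equidistribution on the torus generated by $\theta_j/(2\pi)$, to conclude that $g_n$ has a strictly positive supremum $M_+$ and strictly negative infimum $M_-$, both approached within any prescribed $\varepsilon > 0$ along infinitely many $n$. Choosing $\varepsilon < \tfrac12 \min(M_+, -M_-)$ and absorbing the $o(1)$ error term then forces $u_n$ to be strictly positive infinitely often and strictly negative infinitely often. The main technical nuisance will be the bookkeeping around repeated roots, namely separating those of modulus $\rho$ by whether they hit the maximum multiplicity $m$ and verifying that the subdominant contribution is genuinely $o(\rho^n n^{m-1})$; the equidistribution step itself is standard once we have arranged $g_n \not\equiv 0$.
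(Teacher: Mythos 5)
Your proposal is correct and follows essentially the same route the paper sketches: normalise by the dominant growth rate $\rho^n n^{m-1}$, observe that the hypothesis leaves only complex-conjugate pairs contributing at that rate, pick an initialisation in the full-dimensional neighbourhood (using invertibility of the linear map from $\mathbf{c}$ to the exponential-polynomial coefficients) for which the dominant trigonometric sum is nontrivial, and apply \cite[Lemma 4]{Braverman06} while absorbing the $o(1)$ remainder. The only addition you make is spelling out the linear-algebra step that the paper leaves implicit.
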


\textbf{Henceforth, we assume that the characteristic polynomial has a real positive dominant root of maximum multiplicity, for otherwise the answer to Ultimate Positivity is trivially NO.} 

We define $\seq{\mathbf{p}, \mathbf{q_n}}_{dom}$ to be the normalised contribution of the dominant terms in the exponential polynomial solution. That is, if the dominant growth rate is $\rho^n n^\ell$, we pick terms with that growth rate, and divide their contribution by $\rho^n n^\ell$. For example, if
$
u_n = p_1 2^n + p_2 2^n\cos (n\theta -\varphi) + p_3 2^n \sin (n\theta-\varphi) + p_4
$
then $$\seq{\mathbf{p}, \mathbf{q_n}}_{dom} = p_1  + p_2 \cos (n\theta-\varphi) + p_3 \sin (n\theta-\varphi).$$
We define
$
\mu(\mathbf{c}) = \liminf_{n \in \naturals} \seq{\mathbf{p}, \mathbf{q_n}}_{dom}.
$
Note that $\mu$ is an intrinsic property of the initialisation $\mathbf{c}$ and the sequence it generates, and hence is invariant under the choice of ``phase shift'' $\varphi$ while defining $\{\mathbf{q_n}\}_{n\in \naturals}$. In our example, it is $p_1 - \sqrt{p_2^2 + p_3^2}$.

\section{Robust Positivity Problems}
\label{section:problems}

In this paper, we shall focus on defining and tackling Robust Positivity problems. Our input consists of a linear recurrence relation $\mathbf{a}$, an initialisation $\mathbf{c}$, and a positive definite matrix $\mathbf{S}$ that is used to define a neighbourhood around $\mathbf{c}$. \textbf{All input is real algebraic.}\footnotemark \footnotetext{The field of algebraic numbers $\algebraics$ is the algebraic closure of the rationals $\rationals$. Arithmetic and polynomial factoring over $\algebraics$ can be performed with exact precision. We use $\realalgebraics$ to denote the field of real algebraic numbers, and refer the reader to Appendix \ref{appendix:prelims} for an initiation to these number fields.}

\begin{problem}[$\mathbf{S}$-Robust Positivity]
\label{prob:rrobpos}
Decide whether for all $\mathbf{c'}$ such that $(\mathbf{c'} - \mathbf{c})^T\mathbf{S}(\mathbf{c'} - \mathbf{c}) \le 1$, the LRS $(\mathbf{a}, \mathbf{c'})$ is positive.
\end{problem}

\begin{problem}[$\mathbf{S}$-Robust Uniform Ultimate Positivity]
\label{prob:rrobuniultpos}
Decide whether there exists an $N$ such that for all $\mathbf{c'}$ with $(\mathbf{c'} - \mathbf{c})^T\mathbf{S}(\mathbf{c'} - \mathbf{c}) \le 1$, the LRS $(\mathbf{a}, \mathbf{c'})$ is positive from the $N^{th}$ term onwards.
\end{problem}

We can switch the order in which $N$ and $\mathbf{c'}$ are quantified, and query a weaker notion of Robust Ultimate Positivity:
\begin{problem}[$\mathbf{S}$-Robust Non-uniform Ultimate Positivity]
\label{prob:rrobnonuniultpos}
Decide whether for all $\mathbf{c'}$ with $(\mathbf{c'} - \mathbf{c})^T\mathbf{S}(\mathbf{c'} - \mathbf{c}) \le 1$ , there exists an $N$ such that the LRS $(\mathbf{a}, \mathbf{c'})$ is positive from the $N^{th}$ term onwards.
\end{problem}

The attentive reader might have already noticed that we depart from convention and specify neighbourhoods as \textit{closed} balls. Although \cite{originalarxiv} does not solve the problems we consider in this paper, it makes crucial observations about the geometry: for Problems \ref{prob:rrobpos} and \ref{prob:rrobuniultpos}, there is no difference between open and closed balls. On the other hand, Problem \ref{prob:rrobnonuniultpos} becomes considerably easier with open balls, and its decidability in this case is tackled in \cite{originalarxiv} itself. 

\subsection{Uniform Variants: The foundation}
\label{section:uniformfoundation}
In general, an arbitrary point $\mathbf{c'}$ is expressed as $\mathbf{c} + \mathbf{d}$, where $\mathbf{d} \in \mathcal{P}$, a full-dimensional neighbourhood symmetric about the origin. Observe equation \ref{eq:companion}. The $n^{th}$ term of the LRS is non-negative throughout the neighbourhood if and only if for all $d \in \mathcal{P}$,
$
\mathbf{e_1}^T \mathbf{A}^n (\mathbf{c + d}) \ge 0.
$
We can use the symmetry of $\mathcal{P}$ about the origin to rewrite the above as
\begin{equation}
\label{eq:illustrate}
\mathbf{e_1}^T \mathbf{A}^n \mathbf{c}\ge \max_{\mathbf{d} \in \mathcal{P}} \mathbf{e_1}^T\mathbf{A}^n\mathbf{d} \ge 0.
\end{equation}
As a simple illustration, assume that the neighbourhood is defined by a polytope rather than a positive definite matrix. This situation arises, for instance, when the metric is based on the $\ell^1$- or $\ell^\infty$-norm, as opposed to the $\ell^2$-norm.  In this simple example, $\mathcal{P}$ is a polytope, hence $\mathbf{e_1}^T\mathbf{A}^n\mathbf{d}$ is maximised at one of the finitely many corners $\{\mathbf{d_1}, \dots, \mathbf{d_k}\}$. Thus, Robust (Uniform Ultimate) Positivity is decided by using the state of the art \cite{joeljames3} to check the (Ultimate) Positivity of each of the LRS $(\mathbf{a}, \mathbf{c+d_i})$. The geometry of our setting is not simple enough to allow such a straightforward approach.
\textbf{The overview of our approach to Problems \ref{prob:rrobpos} and \ref{prob:rrobuniultpos} is as follows.}
\begin{enumerate}
\item Decide (constructively for Problem \ref{prob:rrobpos}) whether there exists an $N_1$ such that $\mathbf{e_1}^T \mathbf{A}^n \mathbf{c} \ge 0$ for all $n > N_1$. If $N_1$ is explicitly required, the state of the art is able to tackle LRS of order $\le 5$ \cite{joeljames3} and Simple LRS of order $\le 9$ \cite{ouaknine2014positivity}. In the non-constructive case, it can further handle all simple LRS \cite{ouaknine2014ultimate}.
\item Use linear-algebraic arguments to define a real algebraic LRS $(v_n)_{n=0}^\infty$, such that $v_n \ge 0$ if and only if $|\mathbf{e_1}^T \mathbf{A}^n \mathbf{c}|\ge \max_{\mathbf{d} \in \mathcal{P}} \mathbf{e_1}^T\mathbf{A}^n\mathbf{d}$.
\item Decide (constructively for Problem \ref{prob:rrobpos}) whether there exists $N_2$ such that $v_n \ge 0$ for all $n > N_2$. Positivity throughout the neighbourhood is thus guaranteed beyond step $N = \max(N_1, N_2)$. If either $N_1$ or $N_2$ does not exist, then Robust Ultimate Positivity, and hence Robust Positivity, does not hold.
\item \textbf{Only for Problem \ref{prob:rrobpos}:} Explicitly check inequality \ref{eq:illustrate} for $n \le N$.
\end{enumerate}

Our novelty lies in Step 2 and identifying when Step 3 can be implemented. We now discuss how we perform Steps 2 and 4 when $\mathcal{P} = \mathcal{B}_\mathbf{S}$, a neighbourhood of vectors $\mathbf{d}$ such that $\mathbf{d}^T\mathbf{S}\mathbf{d} \le 1$. The defining parameter $\mathbf{S}$ is a real algebraic positive definite matrix. We note that since $\mathbf{S}$ is positive definite, it can be factored as $\mathbf{G}^T\mathbf{G}$, where $\mathbf{G}$ is a real algebraic invertible matrix. We denote $\mathbf{Gd} = \mathbf{f}$. We argue that $\mathbf{G}^{-1}$ bijectively maps the Euclidean unit ball $\mathcal{B}$ to $\mathcal{B}_\mathbf{S}$. The bijection is clear from the invertibility of the matrix. Suppose $\mathbf{d} = \mathbf{G}^{-1}\mathbf{f}$, where $\mathbf{f} \in \mathcal{B}$, i.e. $\mathbf{f}^T\mathbf{f} \le 1$. Then $\mathbf{d}^T\mathbf{Sd} = \mathbf{d}^T\mathbf{G}^T\mathbf{Gd} = \mathbf{f}^T\mathbf{f} \le 1.$
Hence,
\begin{equation}
\label{eq:bijectivemap}
\max_{\mathbf{d} \in \mathcal{B}_\mathbf{S}} \mathbf{e_1}^T\mathbf{A}^n\mathbf{d} = \max_{\mathbf{f} \in \mathcal{B}} \mathbf{e_1}^T\mathbf{A}^n\mathbf{G}^{-1}\mathbf{f}.
\end{equation}
$\mathcal{B}$ is a convex set; thus a linear function will necessarily be maximised at its boundary, i.e. when $||\mathbf{f}|| = 1$. The linear function $\mathbf{h}^T\mathbf{f}$ is maximised over the unit Euclidean ball when $\mathbf{f}$ is aligned along $\mathbf{h}$; the maximum value is $||\mathbf{h}||$. We can thus perform Step 4 because
\begin{equation}
\max_{\mathbf{d} \in \mathcal{B}_\mathbf{S}} \mathbf{e_1}^T\mathbf{A}^n\mathbf{d} = \left|\left|\left( \mathbf{e_1}^T\mathbf{A}^n\mathbf{G}^{-1} \right)^T\right|\right|.
\end{equation}

For Step 2, we need a necessary and sufficient condition for $|\mathbf{e_1}^T \mathbf{A}^n \mathbf{c}|\ge \max_{\mathbf{d} \in \mathcal{P}} \mathbf{e_1}^T\mathbf{A}^n\mathbf{d}$, in terms of the positivity of an LRS at iterate $n$. We simply square both sides of the inequality, and transfer all terms to the left: 
\begin{equation}
\label{eq:critical}
(\mathbf{e_1}^T \mathbf{A}^n \mathbf{c})^2 - (\mathbf{e_1}^T \mathbf{A}^n \mathbf{g_1})^2 - \dots - (\mathbf{e_1}^T \mathbf{A}^n \mathbf{g_\kappa})^2 \ge 0.
\end{equation}
Crucially, $\mathbf{g_1}, \dots, \mathbf{g_\kappa}$ are the linearly independent columns of the invertible $\mathbf{G}^{-1}$. Only Step 3 remains to be addressed: we must (constructively) decide whether there exists $N_2$ such that the previous inequality holds for all $n > N_2$. In \S\ref{section:decidability}, we give the technical details, thus proving our first main decidability result.

\begin{theorem}[First Main Decidability Result]
\label{thm:decide}
Problem \ref{prob:rrobuniultpos} is decidable for simple LRS. Problem \ref{prob:rrobpos} is decidable for simple LRS up to order 5. Problems \ref{prob:rrobpos} and \ref{prob:rrobuniultpos} are decidable for general LRS up to order 4.
\end{theorem}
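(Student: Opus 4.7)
The plan is to follow the four-step reduction set out in \S\ref{section:uniformfoundation}, applied to the auxiliary sequence
\[
v_n \,=\, (\mathbf{e_1}^T \mathbf{A}^n \mathbf{c})^2 - \sum_{i=1}^\kappa (\mathbf{e_1}^T \mathbf{A}^n \mathbf{g_i})^2.
\]
First I would handle Step 1, the existence (and, for Problem \ref{prob:rrobpos}, the effective computation) of an index $N_1$ after which $\mathbf{e_1}^T\mathbf{A}^n\mathbf{c}$ is non-negative, by directly invoking the state of the art from \cite{joeljames3, ouaknine2014positivity, ouaknine2014ultimate}; this comfortably covers all orders advertised in the theorem. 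The genuine work lies in Steps 3 and 4, whose feasibility hinges on a structural analysis of $v_n$.

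The sequence $v_n$ is a real algebraic LRS whose characteristic roots lie in the product set $\{\gamma_i \gamma_j : \gamma_i, \gamma_j \in \mathrm{spec}(\mathbf{A})\}$, so its dominant modulus is $\rho^2$, where $\rho$ is the dominant modulus of $\mathbf{A}$. Although $v_n$ may have order as large as $\binom{\kappa+1}{2}$ and need not be simple (so one cannot just feed it into the decision procedures used in Step 1), its dominant contribution has a clean trigonometric description: substituting the real exponential polynomial form (\ref{eq:realexppoly}) into $u_n := \mathbf{e_1}^T\mathbf{A}^n\mathbf{c}$ and each $w_n^{(i)} := \mathbf{e_1}^T\mathbf{A}^n\mathbf{g_i}$, then collecting the $\rho^{2n}$-scale contributions, yields a real algebraic trigonometric polynomial $T(n\theta_1,\ldots,n\theta_m)$ in the arguments $\theta_j$ of the dominant complex roots of $\mathbf{A}$, with a geometrically decaying sub-dominant remainder. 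Asymptotic non-negativity of $v_n$ then reduces to non-negativity of $T$ on the closure in the torus $(\reals/2\pi\reals)^m$ of the orbit $\{(n\theta_1,\ldots,n\theta_m)\}_{n\in\naturals}$. By Kronecker's theorem this closure is a real algebraic subgroup whose defining multiplicative relations among the $e^{i\theta_j}$ are computable from the algebraic data, and non-negativity of a real algebraic trigonometric polynomial over a real algebraic subset of the torus is decidable via quantifier elimination over real closed fields. This will dispatch Step 3 in the non-constructive form needed for Problem \ref{prob:rrobuniultpos} on simple LRS at all orders.

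To make Step 3 effective (as Problem \ref{prob:rrobpos} requires), I would first use the analysis above to rule out or certify the borderline case where the infimum of $T$ over the orbit closure is zero; when $T$ is strictly positive on the orbit closure, I would extract an explicit positive lower bound and combine it with a Baker-style upper bound on the sub-dominant remainder to produce $N_2$. Step 4 then consists of exact real algebraic computation of inequality (\ref{eq:illustrate}) at the finitely many iterates $n \le \max(N_1,N_2)$. The hard part, and the true source of the order thresholds in the theorem, is precisely this borderline: when $T$ touches zero on the orbit closure, the eventual sign of $v_n$ is decided by sub-dominant layers whose analysis rapidly collides with the Diophantine hardness exhibited in \S\ref{section:hardness}, and the advertised orders are exactly those small enough for the number of dominant angles, and hence the torus dimension $m$, to remain tractable with current Diophantine tools.
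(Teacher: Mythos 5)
Your setup (the auxiliary LRS $v_n$, the four-step reduction, the observation that the characteristic roots of $v_n$ are pairwise products of those of $\mathbf{A}$) matches the paper's exactly, and Steps 1 and 4 are indeed routine. But the proposal founders precisely where the theorem's content lies: the borderline case in which the dominant trigonometric contribution has infimum zero over the orbit closure. Your claimed reduction ``asymptotic non-negativity of $v_n$ reduces to non-negativity of $T$ on the orbit closure'' is not correct when that infimum is zero and attained: along the near-critical subsequence the sign of $v_n$ is decided by the sub-dominant remainder, and resolving this is in general exactly the Diophantine-hard situation of \S\ref{section:hardness}. You acknowledge this is ``the hard part'' but offer no resolution, so the proof of both the order-5 simple bound and the order-4 general bound is missing; a Kronecker-plus-quantifier-elimination check of $T \ge 0$ only delivers the necessary condition $\mu \ge 0$ and the sufficient condition $\mu > 0$.

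The paper closes this gap by two different arguments, neither of which appears in your proposal. For simple LRS it does not analyse the torus directly: it observes that $v_n$ is itself a simple real algebraic LRS (so \cite{ouaknine2014ultimate} already settles the non-constructive case at all orders) and then \emph{counts} the dominant roots of $v_n$ for the constructive case. The constructive fragment of \cite{ouaknine2014positivity} covers simple LRS with (a) all roots of equal modulus or (b) at most three dominant conjugate pairs; a case analysis on the dominant roots of the order-$\le 5$ input (which must include a real positive one by Proposition \ref{prop:folklore}) shows the product set always lands in case (a) or (b), and that order 6 breaks this (five dominant conjugate pairs together with non-dominant roots). For general order-4 LRS, after reducing to roots $1,1,\gamma,\bar\gamma$, the paper splits $v_n$ into an $n$-scale part $f(n\theta)$ and a constant-scale part $g(n\theta)$, and the decisive step is a structural lemma: the doubly-critical case $f(\varphi)=g(\varphi)=0$ \emph{cannot occur}, because it would force equality in Cauchy--Schwarz between $\mathbf{u_1}$ and $\mathbf{u_2}+\mathbf{u_3}$, contradicting the linear independence of the rows of the invertible matrix arising from the positive definiteness of $\mathbf{S}$. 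That observation is what makes the borderline case tractable and is entirely absent from your argument; the remaining subcases ($g(\varphi)>0$ versus $g(\varphi)<0$) are then handled with the quantitative Lemma \ref{eq:quadraticdecay} rather than a generic Baker-style remainder bound.
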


\subsection{The non-uniform variant: An overview}
\label{section:nonuniformoverview}
As discussed at length in \cite{originalstacs,originalarxiv}, $\mu(\mathbf{c'})= \liminf_{n \in \naturals} \seq{\mathbf{p'}, \mathbf{q_n}}_{dom} \ge 0$ is necessary for the Ultimate Positivity of $\mathbf{c'}$; $\mu(\mathbf{c'}) > 0$ is sufficient.

\textbf{Our strategy for Problem \ref{prob:rrobnonuniultpos} is as follows.}
\begin{enumerate}
\item Use the First Order Theory of the Reals to check that $\mu(\mathbf{c'}) \ge 0$ for all $\mathbf{c'}$ in the given neighbourhood, and detect the critical boundary cases when $\mu(\mathbf{c'}) = 0$.
\item Exploit the low dimensionality to decide the critical boundary cases when $\mu(\mathbf{c'}) = 0$.
\end{enumerate}

\begin{figure}[h]

\includegraphics[width=\textwidth]{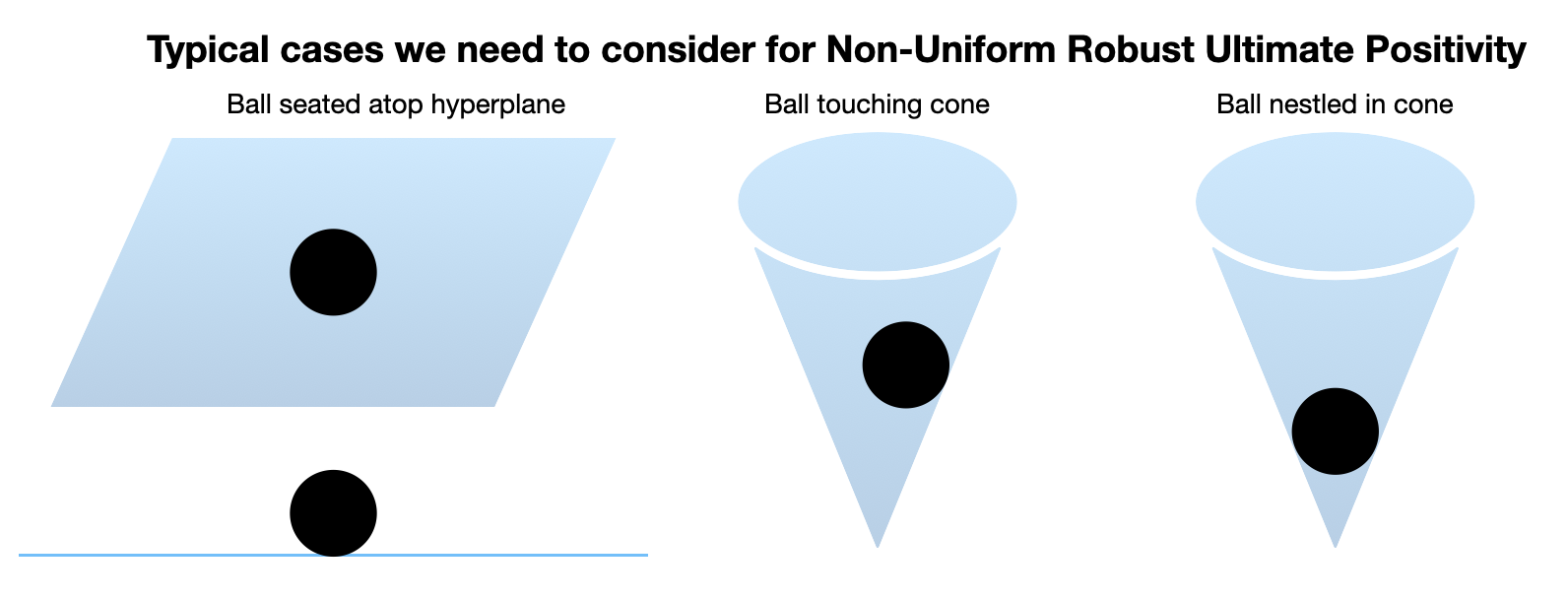}
\caption{Visual intuition. The region $\mu \ge 0$ is defined by the intersection of halfspaces. The orientation of the neighbourhood relative to this region is deduced with the First Order Theory of the Reals. When there are finitely many halfspaces, the critical case is marked by the ball being tangent to the separating hyperplane(s) at finitely many discrete points. In low dimensions, Ultimate Positivity can be decided for these boundary cases using existing techniques. When there are infinitely many halfspaces, they carve out a region that resembles a cone. The neighbourhood can either touch the cone as before, or be nestled in it, having a continuous, connected region of tangency. In the latter case, Robust Ultimate Positivity can be handled with number-theoretic arguments in the low-dimensional setting.}
\label{fig:geometricpicture}
\end{figure}

We adopt this strategy (see Figure \ref{fig:geometricpicture}) and prove our second decidability result in \S\ref{section:decidability2}.
\begin{theorem}[Second Decidability Result]
\label{thm:decide2}
Problem \ref{prob:rrobnonuniultpos} is decidable up to order 4.
\end{theorem}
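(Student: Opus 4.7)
The plan is to follow the two-step strategy outlined in \S\ref{section:nonuniformoverview}. By Proposition~\ref{prop:folklore} we may assume the characteristic polynomial has a positive real dominant root of maximum multiplicity, since otherwise the answer is trivially negative. Write $\mathbf{c'} = \mathbf{c} + \mathbf{d}$, so that the normalised dominant coefficients $p_1', x_j', y_j'$ appearing in $\seq{\mathbf{p'}, \mathbf{q_n}}_{dom}$ depend explicitly and linearly on the perturbation $\mathbf{d} \in \mathcal{B}_\mathbf{S}$.

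The first step is to decide the universal sentence $\forall \mathbf{d} \in \mathcal{B}_\mathbf{S} \,:\, \mu(\mathbf{c} + \mathbf{d}) \ge 0$. Using the Kronecker--Weyl theorem, refined to handle algebraic relations among the dominant arguments $\theta_j/\pi$, the topological closure of the orbit $\{(n\theta_1, \ldots, n\theta_k) \bmod 2\pi\}$ is a real algebraic subtorus, and $\mu(\mathbf{c'})$ equals the minimum of an explicit trigonometric polynomial over this subtorus. Consequently $\{\mathbf{c'} : \mu(\mathbf{c'}) \ge 0\}$ is semi-algebraic, the universal sentence falls within the first-order theory of the real algebraic numbers, and Tarski--Seidenberg decides it. If the sentence fails, answer NO; if it holds and the infimum of $\mu$ on $\mathcal{B}_\mathbf{S}$ is strictly positive, the sufficiency direction $\mu > 0 \Rightarrow$ Ultimate Positivity discharges the instance with a YES.

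The interesting regime is when $\inf_{\mathcal{B}_\mathbf{S}} \mu = 0$, and I would handle it by case analysis on the dominant spectrum of an order-$4$ characteristic polynomial. Fixing the mandated positive real dominant root, the remaining dominant roots of equal modulus form one of: nothing; a single negative real root; a single complex conjugate pair; or a negative real root together with a complex conjugate pair. In the first two cases the locus $\mu = 0$ is a finite union of hyperplanes in the dominant eigencoordinates, so the ellipsoid $\mathcal{B}_\mathbf{S}$ meets the boundary in finitely many algebraic points $\mathbf{c^*}$; each such point can be computed exactly, and non-robust Ultimate Positivity of $(\mathbf{a}, \mathbf{c^*})$ is decidable at order $4$ by \cite{joeljames3}. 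In the remaining cases the locus is a quadratic cone and $\mathcal{B}_\mathbf{S}$ is either tangent at finitely many algebraic points (discharged as above) or nestled inside the cone along a one-dimensional connected arc of tangency.

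This continuous-tangency subcase is the main obstacle. Along the arc every initialisation $\mathbf{c^*}$ satisfies $\mu(\mathbf{c^*}) = 0$, so after normalising by the dominant growth rate the sequence becomes an expression of the form $p_1'(t) + x'(t)\cos n\theta + y'(t)\sin n\theta + q'(t)(\rho'/\rho)^n$, which is non-negative in the limit but vanishes along the subsequence of $n$ for which $n\theta \bmod 2\pi$ approaches the worst angle. I would parametrise the tangent arc by a single real algebraic parameter $t$, then decide the uniform inequality for all large $n$ by comparing the rate at which $n\theta$ approaches its worst angle against the geometric decay of the (at most one) subdominant exponential. This is where Baker-type bounds on linear forms in logarithms and the Diophantine-approximation machinery of \S\ref{section:diophantine} and \cite{originalarxiv} do the heavy lifting; the scarcity of subdominant spectrum at order $4$ is exactly what makes the estimate tractable, and the matching order-$5$ Lagrange-hardness reduction of \S\ref{section:hardness} shows that this boundary cannot easily be pushed further.
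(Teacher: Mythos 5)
Your first step and your treatment of the finitely-many-hyperplanes and discrete-tangency cases match the paper: check $\mu\ge 0$ over the neighbourhood with the first-order theory of the reals, and hand each isolated algebraic point of tangency to the non-robust Ultimate Positivity decision procedure for simple LRS. (Two small remarks there: at order $4$ a single irrational angle suffices, so the plain density lemma replaces Kronecker--Weyl; and in the case where the dominant roots are $\rho, -\rho, \gamma, \bar\gamma$ there is no subdominant term at all, so even a continuous arc of tangency is harmless --- $\mu\ge 0$ everywhere is then necessary \emph{and} sufficient and the whole instance is discharged by Tarski--Seidenberg.)

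The genuine gap is in the continuous-tangency subcase with one subdominant root, which you correctly identify as the main obstacle but then propose to resolve with the wrong tool. Baker-type bounds on linear forms in logarithms lower-bound $[n\theta-\phi]$ polynomially in $n$ for a \emph{fixed algebraic} phase $\phi$; that is exactly what makes a single algebraic point of tangency decidable, but it says nothing about a continuum of phases, almost all of which are transcendental, and it points in the wrong direction for what is needed. The paper's resolution is that in this subcase the answer is \emph{always NO}: by Lemma~\ref{lemma:existsreal} (proved via the Ostrowski numeration system), for any irrational $\theta/2\pi$, any nondegenerate interval of phases, and any strictly decreasing positive function $\psi$ --- in particular the geometrically decaying $\psi(n)=\sqrt{w_0|\alpha|^n/(2\pi^2 z_0)}$ --- there exists a phase $\phi$ in the interval (of the right parity class when $\alpha<0$) with $[n\theta-\phi]_{2\pi}$ beaten by $\psi(n)$ infinitely often, so the quadratic vanishing of $z\bigl(1-\cos(n\theta-\phi)\bigr)$ is overwhelmed by the subdominant term infinitely often for that initialisation. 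Without an existence result of this kind your ``compare the rates'' step cannot be carried out: there is no uniform rate over the arc to compare against, and the decision does not in fact depend on such a comparison.
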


\section{Diophantine Approximation}
\label{section:diophantine}
We justify the inability of our techniques to generalise to LRS of higher order by establishing a connection to a number-theoretic hurdle: that of Diophantine Approximation. Diophantine Approximation is a vast and active number-theoretic field of research, one of whose concerns is the approximation of reals by rational numbers. A key tool in this regard is the partial fraction expansion $[a_0; a_1, a_2, \dots]$ of an irrational $t$:
$$
t = a_0 + \cfrac{1}{a_1 + \cfrac{1}{a_2 + \cfrac{1}{\dots}}}
$$
where $a_0, a_1, a_2, \dots \in \naturals$. Truncating this expansion at progressively greater depths yields a series of increasingly accurate approximations. The quality of the rational approximation depends not only on its accuracy but also on the size of the denominator. As discussed in the Introduction, evaluating the quality of the approximation, or that of the convergence, seems inaccessible to contemporary number theory.

The above intuition about the quality of the approximation is captured in the following definition of $L(t)$, the (homogenous) Diophantine approximation type:
\begin{equation}
L(t) = \inf\left\{c \in \reals : \left|t - \frac{p}{q}\right| < \frac{c}{q^2} \text{ for some } p, q\in \integers\right\}.
\end{equation}
Similarly, the quality of the convergence is formalised by defining $L_{\infty}(t)$, the (homogenous) Lagrange constant: 
\begin{equation}
L_\infty(t) = \inf\left\{c \in \reals : \left|t - \frac{p}{q}\right| < \frac{c}{q^2} \text{ for infinitely many } p, q\in \integers\right\}.
\end{equation}
 
For technical purposes, we use an equivalent definition that relates to the continued fraction perspective, and allows for a slight generalisation. We follow Lagarias and Shallit’s terminology \cite{dio-constants} and use $[x]$ to denote the shortest distance from $x$ to an integer; while $[x]_b$ denotes the shortest distance from $x$ to an integer multiple of $b \in \reals$. It is easy to observe the property $[x]_b = b[x/b]$.

\begin{definition}[Diophantine Approximation Type]
\label{def:L}
The homogenous Diophantine approximation type $L(t)$ is defined to be $\inf_{n \in \naturals_{>0}} n[nt]$. The inhomogeneous Diophantine approximation type $L(t, s)$ is defined to be $\inf_{n \in \naturals_{>0}} n[nt - s]$, $s \notin \integers + t\integers$. 
\end{definition} 

\begin{definition}[Lagrange constant]
\label{def:Linfty}
The homogenous Lagrange constant $L_\infty(t)$ is defined to be $\liminf_{n \in \naturals} n[nt]$. The inhomogeneous Lagrange constant $L_\infty(t, s)$ is defined to be\\ $\liminf_{n \in \naturals} n[nt - s]$, $s \notin \integers + t\integers$.
\end{definition} 

From the definitions, it is clear that $0 \le L(t) \le L_\infty(t)$. Due to the work of Khintchine \cite{khintchine}, it is known that these constants lie between $0$ and ${1}/{\sqrt{5}}$. In our setting, the irrational $t$ comes from the argument $\theta$ of the characteristic root $\rho e^{i\theta}$ of the LRS: $t = \theta/2\pi$. The following observation hints at how the correspondence could further extend to Positivity problems. It is pivotal to our novel low-dimensional decidability result for Uniform Robustness.
\begin{lemma}
\label{eq:quadraticdecay}
For all $\theta$ that are not rational multiples of $2\pi$, and all $\varphi$, there exist infinitely many $n\in \naturals$ such that
$
1 - \cos(n\theta- \varphi) \le \frac{1}{2}\left[n\theta - \varphi \right]_{2\pi}^2 = 2\pi^2[nt -s]^2 \le \frac{2\pi^2}{5n^2}.
$
\end{lemma}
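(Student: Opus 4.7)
The proof of this lemma naturally splits into three assertions: the trigonometric inequality on the left, the algebraic identity in the middle, and the Diophantine bound on the right. I would handle them in that order.

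First, I would dispatch the leftmost inequality $1 - \cos(n\theta - \varphi) \le \tfrac{1}{2}[n\theta - \varphi]_{2\pi}^2$ as a purely trigonometric fact. Since $\cos$ is $2\pi$-periodic and even, $\cos(n\theta - \varphi) = \cos(y)$ where $y$ is the signed representative of $n\theta - \varphi$ modulo $2\pi$ nearest to $0$, so $|y| = [n\theta - \varphi]_{2\pi}$. The identity $1 - \cos y = 2\sin^2(y/2)$ combined with $|\sin(y/2)| \le |y|/2$ yields $1 - \cos y \le y^2/2$, which is exactly the claim. Notably, this half holds without any irrationality assumption on $\theta$.

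Next, I would verify the equality $\tfrac{1}{2}[n\theta - \varphi]_{2\pi}^2 = 2\pi^2 [nt - s]^2$ by direct substitution. With $t = \theta/(2\pi)$ and $s = \varphi/(2\pi)$, the homogeneity property $[x]_b = b[x/b]$ (quoted in the excerpt) gives $[n\theta - \varphi]_{2\pi} = 2\pi [nt - s]$; squaring and dividing by two produces the stated equality.

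The substantive content is the final inequality $2\pi^2 [nt - s]^2 \le 2\pi^2/(5n^2)$, which is equivalent to $n[nt - s] \le 1/\sqrt{5}$ for infinitely many $n \in \naturals$, i.e., $\liminf_n n[nt - s] \le 1/\sqrt{5}$. This is exactly the Khintchine bound on the Lagrange constant recalled in the excerpt: $L_\infty(t) \le 1/\sqrt{5}$ in the homogeneous case, and likewise $L_\infty(t, s) \le 1/\sqrt{5}$ in the inhomogeneous case. The hypothesis that $\theta$ is not a rational multiple of $2\pi$ makes $t$ irrational, which is precisely what Khintchine's theorem requires. I would split on whether $s \in \integers + t\integers$: if $s = k + jt$ for integers $j, k$, then $[nt - s] = [(n-j)t]$, so $n[nt - s] = n[(n-j)t]$, and since $n/(n-j) \to 1$ the $\liminf$ equals $L_\infty(t) \le 1/\sqrt{5}$; otherwise the inhomogeneous Khintchine bound $L_\infty(t, s) \le 1/\sqrt{5}$ applies directly.

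The main obstacle is essentially bookkeeping: one must cite the correct Khintchine result in the inhomogeneous case and handle the boundary scenario $s \in \integers + t\integers$ by reducing to the homogeneous statement, as above. Everything else is elementary trigonometry and the definition of $[\,\cdot\,]_b$.
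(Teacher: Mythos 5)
Your decomposition is correct and matches the argument the paper intends: the paper states this lemma without an explicit proof, relying on exactly the three ingredients you identify (the elementary bound $1-\cos y = 2\sin^2(y/2) \le y^2/2$, the scaling identity $[x]_{2\pi} = 2\pi[x/2\pi]$, and the cited Khintchine/Hurwitz bound $L_\infty \le 1/\sqrt{5}$, with your reduction to the homogeneous case when $s \in \integers + t\integers$ handling the restriction in Definition~\ref{def:Linfty}). The only point worth making explicit is that to pass from a bound on the $\liminf$ to \emph{infinitely many} $n$ satisfying the non-strict inequality, one should invoke the strict form of Hurwitz's theorem (infinitely many $p,q$ with $|t - p/q| < 1/(\sqrt{5}q^2)$) rather than merely $L_\infty(t) \le 1/\sqrt{5}$.
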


The properties of the LRS are driven by whether the characteristic root $\rho e^{i\theta}$ is a root of unity, i.e. $\theta$ is a rational multiple of $2\pi$. If yes, decision procedures are often much simpler; if not, we appeal to the number theory discussed in this section. One can detect whether an algebraic characteristic root is a root of unity by brute enumeration.

\begin{lemma}
\label{lemma:rootofunity}
Let $\alpha$ be an algebraic number of degree $d$. Then if $\alpha$ is a $k^{th}$ root of unity, $k \le 2d^2$.
\end{lemma}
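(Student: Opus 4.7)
The natural route is via cyclotomic polynomials. Let $k$ denote the multiplicative order of $\alpha$, so that $\alpha$ is a primitive $k$-th root of unity and, in particular, $\alpha^k = 1$; every other $m$ with $\alpha^m = 1$ is a multiple of this $k$, so bounding this $k$ by $2d^2$ is what the lemma is really asking for. The minimal polynomial over $\rationals$ of a primitive $k$-th root of unity is the cyclotomic polynomial $\Phi_k$, whose degree is $\varphi(k)$, where $\varphi$ denotes Euler's totient. Hence $d = \varphi(k)$, and the lemma reduces to the purely elementary inequality $k \le 2\varphi(k)^2$ for all $k \in \naturals_{>0}$.

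To establish this bound I would exploit multiplicativity. Writing $k = \prod_i p_i^{e_i}$ in prime factorisation, both $k$ and $\varphi(k)$ factor across the $p_i^{e_i}$, so that $k/\varphi(k)^2 = \prod_i p_i^{e_i}/\varphi(p_i^{e_i})^2$. A short case analysis on a single prime power $p^e$ shows that the factor $p^e/\varphi(p^e)^2$ is at most $1$ in every case except $p^e = 2$, where it equals exactly $2$. Indeed, for $p = 2$ and $e \ge 2$, $\varphi(2^e) = 2^{e-1}$ and the ratio is $2^{2-e} \le 1$; for odd $p$ with $e \ge 2$, the ratio is $p^{2-e}/(p-1)^2 \le 1/4$; and for odd $p$ with $e = 1$, it is $p/(p-1)^2 \le 3/4$. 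Since the prime power $2^1$ can appear at most once across the $p_i^{e_i}$, the overall product $k/\varphi(k)^2$ is bounded by $2$.

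There is really no obstacle here beyond the elementary calculation; the only point to watch is that the single factor-of-two deficit, localised at the prime power $2^1$, is not compounded by multiplicativity. Combining $d = \varphi(k)$ with $k \le 2\varphi(k)^2$ yields the desired $k \le 2d^2$, and the bound is tight at $(d, k) = (1, 2)$, realised by $\alpha = -1$.
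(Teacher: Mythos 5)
Your proof is correct and follows essentially the same route as the paper: identify $d$ with $\varphi(k)$ via the cyclotomic polynomial, then invoke the bound $k \le 2\varphi(k)^2$ (equivalently $\varphi(k) \ge \sqrt{k/2}$), which the paper simply cites as known while you supply the short multiplicative verification. Your opening remark is also a worthwhile clarification the paper glosses over: the bound must be read as applying to the multiplicative \emph{order} of $\alpha$ (the literal statement fails for, say, $\alpha=-1$ viewed as a fourth root of unity), which is exactly what the brute enumeration in the surrounding text requires.
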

\begin{proof}
The degree of the $k^{th}$ root of unity is precisely $\Phi(k)$, where $\Phi$ denotes the Euler totient function. $\Phi(k) \ge \sqrt{k/2}$. The desired inequality follows.
\end{proof}

We record a number-theoretic fact which describes the density of the integer multiples of an irrational $x$ modulo $1$ in the unit interval: its proof relies on continued fraction expansions and the Ostrowski numeration system \cite{bourla2016ostrowski,berthe2022dynamics}, and is deferred to Appendix \ref{appendix:ostrowski}. This result is decisive when considering Non-Uniform Robustness.
\begin{lemma}
\label{lemma:existsreal}
For every irrational number $x$, strictly decreasing real positive function $\psi$, and interval $\mathcal{I} = [a, b] \subset [0, 1], ~ a \ne b$, there exists $y_0 \in \mathcal{I}$ such that $[nx - y_0] < \psi(n)$ for infinitely many even $n$, and $y_1 \in \mathcal{I}$ such that $[nx - y_1] < \psi(n)$ for infinitely many odd $n$.
\end{lemma}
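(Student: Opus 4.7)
My plan is to prove Lemma \ref{lemma:existsreal} via a nested closed interval construction, requiring only density of $\{nx \bmod 1\}$ restricted to each parity class. Since $x$ is irrational, so is $2x$, and Kronecker's theorem gives that $\{2mx \bmod 1\}_{m \in \naturals}$ is dense in $[0,1]$; the odd orbit $\{(2m+1)x \bmod 1\}_{m \in \naturals}$ is a translate of this set and hence also dense. It therefore suffices to construct $y_0$ using the even orbit; the argument for $y_1$ is literally identical after replacing the even orbit by the odd one.

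I would build inductively a strictly increasing sequence of even positive integers $n_1 < n_2 < \cdots$ together with a nested family of closed sub-intervals $\mathcal{I} \supset J_1 \supset J_2 \supset \cdots$, each of positive length, with the property that $[n_k x - y] < \psi(n_k)$ for every $y \in J_k$. At stage $k$, having fixed $J_{k-1}$ of length $\ell_{k-1} > 0$, I use strict monotonicity of $\psi$ (and the case split $\lim \psi = 0$ versus $\lim \psi > 0$, the latter treated separately below) to choose $n_k$ even, larger than $n_{k-1}$, with $\psi(n_k) < \ell_{k-1}/6$. Density of the even orbit then lets me additionally insist that $\{n_k x\}$ lies in the middle third of $J_{k-1}$. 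I set $J_k$ to be the closed interval of length $\psi(n_k)$ centred at $\{n_k x\}$; by construction $J_k \subset J_{k-1}$, has positive length, and every $y \in J_k$ lies within $\psi(n_k)/2 < \psi(n_k)$ of $\{n_k x\}$.

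Cantor's nested interval theorem then supplies $y_0 \in \bigcap_{k \ge 1} J_k \subset \mathcal{I}$, which satisfies $[n_k x - y_0] < \psi(n_k)$ for every $k$, giving infinitely many even $n$ with the desired property. The degenerate case $\lim_{n \to \infty}\psi(n) = \psi_\infty > 0$ is handled separately and trivially: any $y_0 \in \mathcal{I}$ is approached by the dense even orbit within any positive radius, and in particular within $\psi_\infty < \psi(n)$, for infinitely many even $n$. The main technical nuisance I anticipate is the $\bmod\ 1$ identification when $\{n_k x\}$ lies near $0$ or $1$ while $\mathcal{I}$ meets the same endpoint; this is handled cleanly by working with the quotient metric on $\reals/\integers$, under which $[\,\cdot\,]$ is intrinsic, or by shrinking $\mathcal{I}$ to a closed sub-interval of its interior at the outset. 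The paper's invocation of the Ostrowski numeration system presumably yields quantitative control on the $n_k$'s via continued-fraction convergents of $x$, but for the qualitative existence statement claimed here the nested-interval skeleton above already suffices.
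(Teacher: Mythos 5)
Your proof is correct, but it takes a genuinely different route from the paper's. The paper constructs $y$ explicitly through the Ostrowski numeration system associated with the continued fraction of $x$: the digits $b_i$ of $y$ are chosen sparsely so that the partial sums $n_k = \sum_{i\le k} b_i(-1)^{i-1}q_{i-1}$ of convergent denominators serve as the witnesses, with the geometric decay $[n_kx - y] < c/\phi^k$ doing the work against $\psi$; the parity claim is then obtained by applying the base lemma to $2x$ and translating by $x$. You instead run a soft nested-closed-interval argument: each stage shrinks to an interval of length $\psi(n_k)$ centred at a point $\{n_kx\}$ of the (dense) even orbit sitting well inside the previous interval, and Cantor's intersection theorem produces $y_0$. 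Your approach is more elementary --- it needs only density of the irrational rotation orbit restricted to each parity class, not equidistribution or continued fractions --- and it suffices for the purely qualitative way the lemma is invoked in \S\ref{section:decidability2}. What it gives up is the explicit, quantitative description of $y$ and of the witnesses $n_k$ that the Ostrowski construction provides. Your handling of the two degenerate issues (the case $\lim_n\psi(n)>0$, where density alone already finishes, and the $\bmod\ 1$ wraparound, which is moot since all $J_k\subset\mathcal{I}\subset[0,1]$) is sound; the bookkeeping constants ($\psi(n_k)<\ell_{k-1}/6$ against the middle third) check out.
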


The familiar density theorem is an immediate corollary of the above powerful result. Indeed, we can consider an interval of length $\varepsilon/2$, and take $\psi(n) = \varepsilon/2$.
\begin{lemma}
\label{lemma:density}
Let $x$ be irrational, and $y \in [0, 1)$. For every $\varepsilon > 0$, there exist infinitely many even $n$, and infinitely many odd $n$ such that $[nx - y] < \varepsilon$.
\end{lemma}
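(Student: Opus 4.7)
The plan is to invoke Lemma \ref{lemma:existsreal} with a tiny sub-interval of $[0,1]$ anchored at $y$ and a tolerance function $\psi$ uniformly bounded by $\varepsilon/2$. Morally, any $y_0 \in \mathcal{I}$ for which the orbit $\{nx\}$ approximates $y_0$ better than $\varepsilon/2$ infinitely often automatically approximates $y$ itself to within $\varepsilon$ infinitely often, by the triangle inequality for the nearest-integer distance.

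Concretely, I would first dispose of the trivial case $\varepsilon \ge 1/2$, where the statement is vacuous since $[\cdot] \le 1/2$ unconditionally. For $\varepsilon < 1/2$, I would select a closed interval $\mathcal{I} \subset [0,1]$ of length $\varepsilon/2$ that contains $y$, sliding the interval against the boundary if $y$ is near $0$ or $1$. For $\psi$, I would take any strictly decreasing positive function bounded above by $\varepsilon/2$, for instance $\psi(n) = (\varepsilon/4)(1 + 1/n)$; the paper's informal hint of a constant $\psi \equiv \varepsilon/2$ violates the ``strictly decreasing'' hypothesis of Lemma \ref{lemma:existsreal} but the adjustment is purely cosmetic. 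Lemma \ref{lemma:existsreal} then delivers points $y_0, y_1 \in \mathcal{I}$ with $[nx - y_0] < \psi(n) \le \varepsilon/2$ for infinitely many even $n$, and analogously for $y_1$ with odd $n$.

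The closing step is to combine $[nx - y_0] < \varepsilon/2$ with $|y_0 - y| \le \varepsilon/2$ (since both points lie in an interval of length $\varepsilon/2$) via the triangle inequality for the nearest-integer distance:
\[
[nx - y] \;\le\; [nx - y_0] + [y_0 - y] \;\le\; [nx - y_0] + |y_0 - y| \;<\; \varepsilon,
\]
and identically for the odd case using $y_1$. I do not anticipate any substantive obstacle here: the combinatorial weight of the result sits entirely in Lemma \ref{lemma:existsreal}, and this density statement is merely a quantitative weakening obtained by sacrificing the freedom to choose a rapidly vanishing $\psi$ in favour of freedom to choose the target $y$.
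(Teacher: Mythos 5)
Your proposal is correct and follows exactly the paper's intended argument: apply Lemma \ref{lemma:existsreal} to an interval of length $\varepsilon/2$ containing $y$ with a tolerance function bounded by $\varepsilon/2$, then conclude by the triangle inequality for the nearest-integer distance. Your replacement of the paper's constant $\psi \equiv \varepsilon/2$ by a strictly decreasing function bounded by $\varepsilon/2$ is a minor but legitimate tightening to meet the hypotheses of Lemma \ref{lemma:existsreal} as stated.
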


The following application of the density theorem is central to the computation of the discrete $\mu(\mathbf{c}) = \liminf_{n\in \naturals}\seq{\mathbf{p}, \mathbf{q_n}}_{dom}$.
\begin{lemma}
\label{eq:liminfmin}
Suppose $\theta$ is not a rational multiple of $2\pi$. Let $h_1(t)$ be continuous with period $2\pi$. Then
$
\liminf_n \left(h_1(n\theta) + h_2(-1^n)\right) = \min_{t \in [0, 2\pi], b\in \{-1, 1\}} \left(h_1(t) + h_2(b)\right).
$
\end{lemma}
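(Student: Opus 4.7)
The plan is to prove the two inequalities separately. The $\ge$ direction is immediate: for every $n$, the value $h_1(n\theta) + h_2((-1)^n)$ is itself of the form $h_1(t) + h_2(b)$ with $t = n\theta \bmod 2\pi \in [0, 2\pi]$ and $b = (-1)^n \in \{-1, 1\}$, so it is at least the minimum on the right-hand side, and therefore so is the liminf.

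For the $\le$ direction, I would fix a minimiser $(t^*, b^*) \in [0, 2\pi) \times \{-1, 1\}$ of $h_1(t) + h_2(b)$ (it exists because $h_1$ is continuous on the compact circle and $h_2$ ranges over a two-point set). The target is to construct an increasing sequence $n_k \to \infty$ such that $(-1)^{n_k} = b^*$ for every $k$ and $n_k \theta \to t^* \pmod{2\pi}$. Once such a sequence exists, continuity and periodicity of $h_1$ imply $h_1(n_k \theta) \to h_1(t^*)$, while $h_2((-1)^{n_k}) = h_2(b^*)$ is constant, so the corresponding subsequence of values converges to $h_1(t^*) + h_2(b^*)$, forcing the liminf to be at most the minimum.

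The construction of $\{n_k\}$ is exactly what Lemma \ref{lemma:density} provides. I would set $x = \theta/(2\pi)$, which is irrational because $\theta$ is not a rational multiple of $2\pi$, and $y = t^*/(2\pi) \in [0, 1)$. For each tolerance $\varepsilon > 0$, the lemma supplies infinitely many even $n$ and infinitely many odd $n$ with $[nx - y] < \varepsilon/(2\pi)$, which translates directly to $n\theta$ being within distance $\varepsilon$ of $t^*$ modulo $2\pi$. Picking one such $n_k$ of parity matching $b^*$ (even if $b^* = +1$, odd if $b^* = -1$), with a decreasing sequence of tolerances $\varepsilon_k \downarrow 0$, yields the desired subsequence.

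The main obstacle is conceptual rather than technical: the argument needs a density statement that simultaneously controls the parity of $n$, since ordinary equidistribution of $\{n\theta \bmod 2\pi\}$ would leave open the possibility that the value $b^*$ achieving the minimum is associated to a parity that never comes arbitrarily close to $t^*$. Lemma \ref{lemma:density} is exactly tailored to rule this out, so once it is in hand the proof reduces to stitching the ingredients together as above.
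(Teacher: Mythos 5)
Your proof is correct and is precisely the argument the paper intends: the lemma is presented there as a direct application of the parity-aware density statement (Lemma \ref{lemma:density}), with no further proof given, and your two-inequality argument with a parity-matched subsequence converging to the minimiser is exactly how that application is meant to be carried out.
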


We note that despite the observations and results mentioned in the preceding discussion, the Diophantine approximation type and Lagrange constant of most transcendental numbers are unknown. For instance, computing $L_\infty(\pi)$ is a longstanding and mathematically interesting open problem. We refer the reader to \cite[Section 5]{joeljames3} for a cursory survey of the history of relevant developments in the field of Diophantine approximation. This source reduces the computation of the constants $L(t)$ and $L_\infty(t)$ discussed above to the non-robust variants of Positivity problems for LRS of order 6. In \S\ref{section:hardness}, we prove analogous hardness results for robust Positivity problems of order 5. To that end, we define a similar class of transcendental numbers relevant to our reduction. Let
\begin{equation}
\mathcal A=\{p+q i \in \mathbb{C} \mid p,q \in \mathbb{A}, p^2+q^2=1, \forall n.~(p + qi)^n \ne 1\}
\end{equation}
i.e., the set $\mathcal A$ consists of algebraic numbers on the unit circle in $\complexes$, none of which are roots of unity. In particular, writing $p+q i= e^{i 2 \pi \theta}$, we have that $\theta \notin \mathbb{Q}$. We denote:
\begin{equation}
\label{eq:keyset}
\mathcal{T} = \left\{ \theta \in (- 1/2, 1/2] \mid e^{2 \pi i \theta} \in \mathcal{A}\right\}.
\end{equation}
The set $\mathcal{T}$ is dense in $(- \frac 1 2, \frac 1 2]$. In general, we don't have a method to compute $L(\theta)$ or $L_\infty(\theta)$ for $\theta \in \mathcal{T}$, or approximate them to arbitrary precision.

\begin{definition}[Number-theoretic hardness]
\label{def:hardness}
Let $\mathcal{T}$ be as above. A decision problem is said to be $\mathcal{T}$-Diophantine hard (resp.\ $\mathcal{T}$-Lagrange hard), if its decidability entails that given any $t \in \mathcal{T}$ and $\varepsilon > 0$, one can compute $\ell$ such that $|\ell - L(t)| < \varepsilon$ (resp.\  $|\ell - L_\infty(t)| < \varepsilon$).
\end{definition}

\begin{theorem}[Main Hardness Result]
\label{thm:hardness}
Problem \ref{prob:rrobpos} (resp.\ Problem \ref{prob:rrobuniultpos}) is $\mathcal{T}$-Diophantine hard (resp.\ $\mathcal{T}$-Lagrange hard) at order 5. 
\end{theorem}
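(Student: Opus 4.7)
The plan is to encode, via binary search over a rational candidate $\ell \in (0, 1/\sqrt{5})$, the approximation of $L(t)$ (resp.\ $L_\infty(t)$) as a family of order-$5$ Robust Positivity (resp.\ Robust Uniform Ultimate Positivity) queries. Fix $\theta = 2\pi t$ and take the recurrence whose characteristic polynomial is $\Phi(X) = (X-1)(X^2 - 2\cos\theta \cdot X + 1)^2$, of degree $5$, with roots $\{1, e^{\pm i\theta}\}$ of multiplicities $(1, 2, 2)$. Since $t \in \mathcal{T}$, Lemma~\ref{lemma:rootofunity} guarantees that $e^{i\theta}$ is not a root of unity, so the exponential-polynomial solution space is genuinely $5$-dimensional, spanned by $\{1, \cos n\theta, \sin n\theta, n\cos n\theta, n\sin n\theta\}$. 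Note that $\Phi$ has repeated complex roots, so the LRS is \emph{not} simple; this is consistent with the decidability frontier of Theorem~\ref{thm:decide} (simple Uniform Ultimate Positivity is decidable at all orders).

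The key identity I would exploit is the one derived in \S\ref{section:uniformfoundation}: Robust Positivity of $(\mathbf{a}, \mathbf{c}^\star)$ in $\mathcal{B}_{\mathbf{S}}$ is equivalent to non-robust positivity of the derived sequence $v_n = (\mathbf{e_1}^T \mathbf{A}^n \mathbf{c}^\star)^2 - \mathbf{e_1}^T \mathbf{A}^n \mathbf{S}^{-1} (\mathbf{A}^T)^n \mathbf{e_1}$, with the analogous statement for the uniform ultimate variant obtained by replacing ``for all $n$'' by ``for all sufficiently large $n$''. Using the product-closure of LRSs from \S\ref{section:solspace} together with the unit-modulus structure of the roots of $\mathbf{A}$, $v_n$ is an LRS whose exponential-polynomial representation lies in the basis $\{n^k \cos(jn\theta),\ n^k \sin(jn\theta),\ n^k\}$ for $k \in \{0, 1, 2\}$ and $j \in \{1, 2\}$.

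The core technical step is to exploit the combined $5 + 15 = 20$-dimensional parameter space $(\mathbf{c}^\star, \mathbf{S} \succ 0)$ to force $v_n$ into the order-$6$ exponential-polynomial form $A_0 + A_1 n + (B_0 + B_1 n)\cos n\theta + (C_0 + C_1 n)\sin n\theta$, which Ouaknine and Worrell~\cite{joeljames3} showed encodes the Diophantine (resp.\ Lagrange) approximation question $L(t) \ge \ell$ (resp.\ $L_\infty(t) \ge \ell$). This imposes nine linear cancellation conditions, killing the coefficients of the three basis functions $n^2, n^2\cos n\theta, n^2\sin n\theta$ and the six basis functions $n^k \cos 2n\theta,\ n^k \sin 2n\theta$ for $k \in \{0, 1, 2\}$; the remaining six-dimensional residual family is then matched to the six target coefficients (up to a common positive scale) by writing the bilinear map $(\mathbf{c}^\star, \mathbf{S}^{-1}) \mapsto v_n$ in closed form via the Jordan decomposition of $\mathbf{A}$. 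Lemma~\ref{eq:quadraticdecay} ensures that the resulting positivity question for $v_n$ pins down $L(t)$ at convergents of $t$.

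The main obstacle is verifying that the cancellation-and-matching construction is compatible with $\mathbf{S} \succ 0$: a priori the affine conditions on $\mathbf{S}$ could cut out only the boundary or even singular matrices. I would resolve this by starting from the benchmark $\mathbf{S} \propto \mathbf{I}$, for which the induced $v_n$ decomposes as a positive combination of squared LRSs over our basis, and then perturbing $\mathbf{S}$ along the kernel of the nine cancellation linear forms to reach the target while preserving positive definiteness via a standard open-neighbourhood argument. Once this feasibility is established, binary-searching $\ell$ over rationals of bit-length polynomial in $\log(1/\varepsilon)$ delivers the required $\varepsilon$-approximation and yields both hardness claims simultaneously, the Lagrange-hardness version differing only in that the uniform-ultimate quantifier naturally introduces a $\liminf$ in place of an $\inf$.
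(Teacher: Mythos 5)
Your construction fails at the very first step: the recurrence you choose cannot carry the reduction. With characteristic polynomial $(X-1)(X^2-2\cos\theta\,X+1)^2$, all roots have modulus $1$, but the real root $1$ has multiplicity $1$ while $e^{\pm i\theta}$ have multiplicity $2$. Hence the dominant terms of the exponential polynomial closed form are $n\cos n\theta$ and $n\sin n\theta$, and there is \emph{no real dominant root of maximum multiplicity}. By Proposition~\ref{prop:folklore}, every full-dimensional neighbourhood of initialisations (and $\mathbf{S}\succ 0$ forces the ball $\mathcal{B}_{\mathbf{S}}$ to be full-dimensional) contains an initialisation whose sequence has infinitely many negative terms. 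So for your recurrence the answer to both Problem~\ref{prob:rrobpos} and Problem~\ref{prob:rrobuniultpos} is trivially NO for \emph{every} choice of $(\mathbf{c}^\star,\mathbf{S})$, and the oracle answers carry no information about $L(t)$ or $L_\infty(t)$. The paper's choice $(X-1)^3(X^2-2\cos\theta\,X+1)$ is exactly calibrated to avoid this: the real root has multiplicity $3$, strictly exceeding the multiplicity $1$ of the complex pair, so the single dominant term $rn^2$ can be pitted against the lower-order oscillation $1-\cos(2\pi(nt-s))$, which is where the Diophantine content lives.

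Beyond that, your route through the order-$6$ hard instances of \cite{joeljames3} is not the one the paper takes, and the feasibility step you defer is the real crux. The paper does not try to force the derived sequence $v_n$ into a prescribed exponential-polynomial shape by imposing cancellation conditions on $\mathbf{S}^{-1}$; instead it picks $\mathbf{S}=\mathbf{G}^T\mathbf{G}/r^2$ with $\mathbf{G}=\mathbf{V}^{-1}$ (the inverse of the basis-change matrix), so that $\max_{\mathbf{d}\in\mathcal{B}_{\mathbf{S}}}\mathbf{e_1}^T\mathbf{A}^n\mathbf{d}=r\|\mathbf{q_n}\|=r\sqrt{n^4+n^2+2}$ in closed form, and the robust query becomes the explicit inequality $1-\cos(2\pi(nt-s))\ge r\,Q(n)$ with $Q(n)\sim 7/(8n^2)$, which is then related to $n[nt-s]$ directly via $1-\cos x\approx x^2/2$ (Lemma~\ref{lemma:numerical}). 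In your scheme, even granting a suitable recurrence, you would need to verify that the target order-$6$ coefficient pattern is attainable by a map of the form $(\mathbf{e_1}^T\mathbf{A}^n\mathbf{c}^\star)^2-\mathbf{e_1}^T\mathbf{A}^n\mathbf{S}^{-1}(\mathbf{A}^T)^n\mathbf{e_1}$ subject to $\mathbf{S}\succ 0$; the ``perturb from $\mathbf{S}\propto\mathbf{I}$ inside the kernel of the cancellation forms'' argument only shows the kernel meets the positive definite cone in an open set, not that the \emph{specific} residual coefficients demanded by the Ouaknine--Worrell instance are reachable there. As written, the proposal has a fatal gap at the choice of recurrence and an unverified feasibility claim at its core.
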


As noted in \cite{originalarxiv}, in view of the Lagrange hardness (Definition \ref{def:hardness}) of Ultimate Positivity at order 6 \cite{joeljames3}, Problem \ref{prob:rrobnonuniultpos}, which asks whether the given neighbourhood consists entirely of initialisations that produce an Ultimately Positive sequence, is also Lagrange hard at order 6. The idea is to use the existing reduction from the computation of Lagrange constants to Ultimate Positivity, and extend it to the robust variant: one simply constructs a neighbourhood of initialisations that has the hard instance of Ultimate Positivity on its surface, but otherwise lies entirely in the region where Ultimate Positivity is guaranteed.

\begin{theorem}
\label{thm:hardness2}
Problem \ref{prob:rrobnonuniultpos} is $\mathcal{T}$-Lagrange hard at order 6. 
\end{theorem}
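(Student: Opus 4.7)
The plan is to lift the existing Lagrange-hardness reduction of non-robust Ultimate Positivity at order 6 \cite{joeljames3} to our setting. Given $t \in \mathcal{T}$ and target precision $\varepsilon > 0$, that reduction supplies an order-$6$ LRS $(\mathbf{a}, \mathbf{c})$ whose Ultimate Positivity status resolves the relevant bit of $L_\infty(t)$. It therefore suffices to exhibit a centre $\mathbf{c_0}$ and a positive definite matrix $\mathbf{S}$ such that the hard seed $\mathbf{c}$ lies on the boundary of the Mahalanobis ball $\mathcal{B}_\mathbf{S}$ around $\mathbf{c_0}$, while every other initialisation in the ball is unconditionally Ultimately Positive. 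Under such a construction, $(\mathbf{a}, \mathbf{c_0}, \mathbf{S})$ is a YES-instance of Problem \ref{prob:rrobnonuniultpos} iff $\mathbf{c}$ itself is Ultimately Positive, so any decision procedure for Problem \ref{prob:rrobnonuniultpos} would approximate $L_\infty(t)$ to arbitrary precision.

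The lever for constructing such a neighbourhood is the sufficient condition $\mu(\mathbf{c'}) > 0 \Rightarrow$ Ultimate Positivity recalled in \S\ref{section:nonuniformoverview}. As illustrated in \S\ref{section:solspace}, the map $\mathbf{c'} \mapsto \mu(\mathbf{c'})$ takes the form ``real dominant coefficient minus a sum of Euclidean norms of complex-coefficient pairs'' and is therefore concave, making $\{\mu \ge 0\}$ closed and convex with $\mathbf{c}$ on its boundary (since $\mathbf{c}$ being the hard instance forces $\mu(\mathbf{c}) = 0$). I would choose a direction $\mathbf{v}$ strictly in the interior of the tangent cone to $\{\mu \ge 0\}$ at $\mathbf{c}$ --- for instance the unit vector that purely pushes the real dominant coefficient upward --- and parametrise the neighbourhood as $\mathbf{c_0} := \mathbf{c} + \delta \mathbf{v}$, $\mathbf{S} := \delta^{-2}\mathbf{M}$ for a real algebraic positive definite $\mathbf{M}$. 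For $\mathbf{M}$ and $\delta$ chosen appropriately, $\mathcal{B}_\mathbf{S}$ has $\mathbf{c}$ as its unique boundary point on $\{\mu = 0\}$ and satisfies $\mu(\mathbf{c'}) > 0$ elsewhere, certifying Ultimate Positivity throughout the interior and at all other boundary points.

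The main obstacle is making this uniform positivity claim precise. The convex region $\{\mu \ge 0\}$ can be a proper subset of the supporting half-space at $\mathbf{c}$, so its boundary could in principle bend inward and graze the ball away from $\mathbf{c}$. To preclude this I would (i) bound the local second-order geometry of $\{\mu = 0\}$ near $\mathbf{c}$ using the explicit description of $\mu$ as a finite sum of smooth pieces, with any cone-like kink at $\mathbf{c}$ tamed by the fact that $\mathbf{v}$ was chosen strictly in the interior of the tangent cone, and (ii) pick $\mathbf{M}$ sufficiently anisotropic --- elongated along $\mathbf{v}$ and thin in perpendicular directions --- and $\delta$ sufficiently small that the quadratic retreat of the ball from its tangent hyperplane strictly dominates this curvature. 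Since $\mathbf{c}$, the direction $\mathbf{v}$, the relevant curvature bounds, and the parameters $\mathbf{M}, \delta$ can all be expressed and verified within the first-order theory of the reals over $\realalgebraics$, the centre $\mathbf{c_0}$ and matrix $\mathbf{S}$ can be chosen real algebraic, respecting the legal input format of Problem \ref{prob:rrobnonuniultpos} and completing the reduction.
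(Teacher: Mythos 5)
Your proposal matches the paper's argument: the paper likewise lifts the order-6 Lagrange-hardness reduction for Ultimate Positivity from \cite{joeljames3} by placing the hard initialisation on the surface of a neighbourhood that otherwise lies entirely in the region where Ultimate Positivity is guaranteed (i.e.\ where $\mu > 0$). Your additional geometric detail --- exploiting the concavity of $\mu$ and choosing an anisotropic, real algebraic $\mathbf{S}$ tangent to $\{\mu = 0\}$ only at the hard point --- is a correct and more explicit rendering of the construction the paper only sketches.
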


\section{Decidability of Uniform Robustness}
\label{section:decidability}

  \label{thm:abelian}

In this section, we prove Theorem \ref{thm:decide} by showing that we can implement Step 3 of the overview in \S\ref{section:uniformfoundation}: (constructively) decide whether there exists $N_2$ such that for all $n > N_2$,
\begin{equation}
\label{eq:criticalcopy}
(\mathbf{e_1}^T \mathbf{A}^n \mathbf{c})^2 - (\mathbf{e_1}^T \mathbf{A}^n \mathbf{g_1})^2 - \dots - (\mathbf{e_1}^T \mathbf{A}^n \mathbf{g_4})^2 \ge 0.
\end{equation}

\begin{theorem}[First Main Decidability Result, restated]
\label{thm:decidecopy}
Problem \ref{prob:rrobuniultpos} (Robust Uniform Ultimate Positivity) is decidable for simple LRS. Problem \ref{prob:rrobpos} (Robust Positivity) is decidable for simple LRS up to order 5. Problems \ref{prob:rrobpos} and \ref{prob:rrobuniultpos} are decidable for general LRS up to order 4.
\end{theorem}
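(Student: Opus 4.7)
The plan is to implement Step 3 of the overview in Section \ref{section:uniformfoundation}: deciding (constructively, for Problem \ref{prob:rrobpos}) whether there exists $N_2$ such that the quantity
$v_n := (\mathbf{e_1}^T \mathbf{A}^n \mathbf{c})^2 - \sum_{i=1}^{\kappa}(\mathbf{e_1}^T \mathbf{A}^n \mathbf{g_i})^2$
is non-negative for every $n > N_2$. First, I would observe that $v_n$ is itself a real algebraic LRS: by the closure properties recalled in Section \ref{section:solspace}, each squared inner product is an LRS whose characteristic roots are pairwise products of the original characteristic roots, and sums/differences of LRS remain LRS. When the original LRS is simple, coinciding products only amalgamate coefficients without introducing higher multiplicities, so $v_n$ is itself simple, with order at most $\binom{\kappa+1}{2}$.

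Next, I would decide Ultimate (or ordinary) Positivity of $v_n$ on a case-by-case basis. For Problem \ref{prob:rrobuniultpos} in the simple case at all orders, I would feed $v_n$ into the non-constructive Ouaknine--Worrell 2014 Ultimate Positivity algorithm for simple LRS. For general LRS at order $\le 4$, the order of $v_n$ remains bounded, and its dominant growth rate is $\rho^{2n} n^{2\ell}$. After factoring this out, the asymptotic sign of $v_n$ is dictated by a trigonometric polynomial $F(t_1, \dots, t_r)$ in the arguments $t_j = n\theta_j \bmod 2\pi$ of the dominant characteristic roots. Using Lemma \ref{lemma:rootofunity} to separately handle roots of unity and the density results of Section \ref{section:diophantine} otherwise, the dominant analysis reduces to deciding the sign of $\min F$ on the relevant subtorus --- a statement expressible in the First Order Theory of the Reals. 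If $\min F > 0$, Ultimate Positivity holds; if $\min F < 0$, the orbit of $(n\theta_j) \bmod 2\pi$ witnesses infinitely many negative $v_n$. For the simple LRS Positivity case up to order $5$ (Problem \ref{prob:rrobpos}), a constructive $N_2$ can additionally be extracted from the asymptotic gap between $\rho^{2n}n^{2\ell}\cdot\min F$ and the sub-dominant contributions to $v_n$.

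The main obstacle is the boundary case $\min F = 0$. Here the dominant contribution of $v_n$ is forced arbitrarily close to zero along infinitely many $n$: by Lemma \ref{eq:quadraticdecay}, the rate of approach is $\Theta(1/n^2)$ along an infinite subsequence, which is precisely the regime in which the sub-dominant terms of $v_n$ (decaying only as $\rho^{2n}n^{2\ell-1}$, or as $(\rho')^{2n}$ for strictly smaller $\rho'$) can flip the sign. At order $\le 4$ and in the simple setting, the small number of distinct $\theta_j$'s keeps this Diophantine race tractable, either by direct comparison of decay rates or by appeal to the cited Ouaknine--Worrell decision procedures; at higher general orders, the race is governed by the constants $L(t)$ and $L_\infty(t)$ whose approximation is, per Section \ref{section:hardness}, currently out of reach, which is precisely the reason we do not claim decidability beyond these orders. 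Finally, for Problem \ref{prob:rrobpos}, once $N = \max(N_1, N_2)$ is in hand, I would explicitly verify inequality~(\ref{eq:criticalcopy}) for each $n \le N$ by evaluating $|\mathbf{e_1}^T\mathbf{A}^n\mathbf{c}|$ and $\|\mathbf{e_1}^T\mathbf{A}^n\mathbf{G}^{-1}\|$ in exact algebraic arithmetic.
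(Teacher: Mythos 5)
Your overall scaffolding (the auxiliary LRS $v_n$ from inequality \ref{eq:criticalcopy}, closure properties of simple LRS under sums and products, and handing Problem \ref{prob:rrobuniultpos} for simple LRS to the non-constructive procedure of \cite{ouaknine2014ultimate}) matches the paper. But the two claims that carry the theorem are precisely the ones you leave as sketches, and neither sketch goes through as stated. For Problem \ref{prob:rrobpos} on simple LRS up to order 5, extracting a constructive $N_2$ ``from the asymptotic gap'' only works when the normalised dominant contribution of $v_n$ is bounded away from zero; in the critical case it is not, so one needs a \emph{constructive} Ultimate Positivity procedure for the derived simple LRS $v_n$ itself. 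The paper's argument is a census of the dominant characteristic roots of $v_n$ (the pairwise products of dominant roots of $u_n$): for $u_n$ of order at most 5, the derived $v_n$ always either has all roots of equal modulus or at most three dominant conjugate pairs --- exactly the two fragments where \cite{ouaknine2014positivity} yields an effective threshold --- and order 5 is tight, since an order-6 simple $u_n$ can produce five dominant conjugate pairs together with non-dominant roots. That counting argument is the substance of this part of the theorem and is absent from your proposal (your bound $\binom{\kappa+1}{2}$ on the order of $v_n$ is not what matters; the structure of its \emph{dominant} roots is).

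For general LRS at order 4, the boundary case $\min F = 0$ (i.e.\ $\mu = 0$) cannot be delegated to ``the cited Ouaknine--Worrell decision procedures'': with characteristic roots $1,1,\gamma,\bar\gamma$ the derived $v_n$ is a \emph{non-simple} LRS of order 10 (roots $1$ with multiplicity 3, $\gamma,\bar\gamma$ with multiplicity 2, plus $\gamma^2,\bar\gamma^2,|\gamma|^2$), far outside the reach of those procedures. The paper resolves this case by choosing the phase $\varphi$ so that the linear-in-$n$ trigonometric block $f$ attains its minimum $0$ at $\varphi$, splitting on the sign of the constant block $g(\varphi)$ (invoking Lemma \ref{eq:quadraticdecay} to produce infinitely many violations when $g(\varphi)<0$), and then proving the decisive structural lemma that the doubly degenerate case $f(\varphi)=g(\varphi)=0$ \emph{cannot occur}: it would force equality in Cauchy--Schwarz between $\mathbf{u_1}$ and $\mathbf{u_2}+\mathbf{u_3}$, contradicting the linear independence of the rows coming from the invertible $\mathbf{G}^{-1}$. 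Without that lemma your ``direct comparison of decay rates'' has nothing to compare in the degenerate sub-case and the procedure stalls. You correctly identify the boundary case as the main obstacle, but the theorem's proof consists of overcoming it, which your proposal does not do.
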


\subsection{Simple LRS}
We begin by treating simple LRS. The goal is to show that the current state of the art is equipped to handle instances relevant to this setting. Recall the discussion on the point-wise sums of products of simple LRS, surrounding equation \ref{eq:exppoly}. If the original LRS is simple, then inequality \ref{eq:criticalcopy} is also an instance of Ultimate Positivity for simple LRS; indeed, its input can be seen to be real algebraic. In case we are only interested in Robust Ultimate Positivity, the non-constructive decision procedure \cite{ouaknine2014ultimate} suffices, because it completely solves Ultimate Positivity for simple LRS. 

As a corollary of the proof of the decidability of Positivity of simple LRS up to order 9 \cite{ouaknine2014positivity}, Ultimate Positivity for simple LRS is \textit{constructively} decidable if one of the following holds: \textbf{(a)} all characteristic roots have the same modulus; {\bf(b)} there are at most three pairs of complex conjugates among the dominant (maximal modulus) characteristic roots. 

We argue that for the original simple LRS $(u_n)_n$, $5$ is the highest order that guarantees that at least one of the conditions holds for the resulting simple LRS $(v_n)_n$ in inequality \ref{eq:criticalcopy}. For this, we recall the property discussed after equation \ref{eq:exppoly}: if $U$ is the set of characteristic roots of $(u_n)_n$, then the set of characteristic roots of $v_n = u_n^2$ is $V = \{\lambda_1\lambda_2: \lambda_1, \lambda_2 \in U\}$. By Proposition \ref{prop:folklore}, $U$ contains a real positive dominant root $\rho$. It is clear that the dominant roots of $V$ result from, and only from multiplying together pairs of dominant roots from $U$. If $U$ does not have complex dominant roots, neither does $V$. If $\lambda \in U$ is a dominant complex root, then $\lambda\bar\lambda = \rho^2$. If $U = \{\rho, \lambda_1, \lambda_2, \bar{\lambda_1}, \bar{\lambda_2}\}$, all dominant, then all roots of $V$ are dominant, and condition \textbf{(a)} is met. The only remaining case is that $U$ has one pair of complex conjugates among its dominant roots: the scenario that results in most dominant roots in V is $U_{dom} = \{\rho, -\rho, \lambda, \bar{\lambda}\}$. Then, the dominant roots in $V$ are $\{\rho^2, -\rho^2, \lambda^2, \bar\lambda^2, \pm \rho\lambda, \pm \rho\bar\lambda\}$: three conjugate pairs, and condition \textbf{(b)} is met. Finally, we record that order 5 is maximal: consider $U = \{\rho, \lambda_1, \lambda_2, \bar{\lambda_1}, \bar{\lambda_2}, \alpha\}$ with $\alpha$ non-dominant. Then $V$ has five pairs of complex conjugates among its dominant roots, along with the presence of non-dominant roots.

\subsection{Non-simple LRS}
We treat order $4$ LRS: our techniques naturally apply to lower orders too. We make extensive use of the real exponential polynomial closed form \ref{eq:realexppoly} and the surrounding discussion. The key lies in expressing the critical inequality \ref{eq:criticalcopy} as
\begin{equation}
\label{eq:start}
\seq{\mathbf{p}, \mathbf{q_n}}^2 - \seq{\mathbf{b_1}, \mathbf{q_n}}^2 - \dots - \seq{\mathbf{b_4}, \mathbf{q_n}}^2 \ge 0 ~~\Leftrightarrow~~ \seq{\mathbf{x}, \mathbf{r_n}} \ge 0
\end{equation} 
and choosing $\{\mathbf{q_n}\}_{n\in\naturals}$ judiciously. If all the characteristic roots of the original LRS are real, then $\mathbf{q_n}$ is free of trigonometric terms, and hence so is $\mathbf{r_n}$. Thus $\seq{\mathbf{x}, \mathbf{r_n}}$ is also an LRS with all real characteristic roots, and constructively deciding the existence of $N_2$ is easily done through elementary growth arguments. We shall thus assume the presence of a pair of complex conjugates among the characteristic roots. As discussed through Proposition \ref{prop:folklore}, any decision regarding Ultimate Positivity is NO in the absence of a real positive dominant root. At order $4$, this means that there is \textbf{exactly one pair of complex conjugates} among the roots. We further assume, without loss of generality, that \textbf{the real positive dominant root is unity}. We shall also assume \textbf{non-degeneracy}, i.e.\ the ratio of any pair of distinct roots of the characteristic polynomial is not a root of unity. This can be detected, courtesy Lemma \ref{lemma:rootofunity}. In our restricted setting, degeneracy can arise because: (a) $-1$ is a characteristic root; (b) a characteristic root is of the form $\rho e^{2\pi i \cdot \frac{\ell}{k}}$, i.e. a scaled $k^{th}$ root of unity. In this case, any LRS $\seq{\mathbf{v}, \mathbf{q_n}}$ with roots $\{1, \alpha, \rho e^{\pm 2\pi i \cdot \frac{\ell}{k}}\}$ can be decomposed as the interleaving of $2k$ real LRS, each with characteristic roots $\{1, \rho^{2k}\} \cup \{\alpha^{2k}\}$.

The only possibility, therefore, is that the characteristic roots are $1, 1, \gamma, \bar{\gamma}$. Let $0 < |\gamma| = \rho \le 1$, where $\gamma = \rho e^{i\theta}$ is not a scaled root of unity. We take inequality \ref{eq:start} as the starting point for our computations. Let $\mathbf{q_n} = \begin{bmatrix} n & 1 & \rho^n\cos(n\theta - \varphi) & \rho^n\sin(n\theta -\varphi) \end{bmatrix}^T$. Let $\mathbf{u_1}^T, \dots, \mathbf{u_4}^T$ be the rows of the \textbf{invertible} matrix $\begin{bmatrix} \mathbf{b_1}& \dots & \mathbf{b_4}\end{bmatrix}$. The table below shows the terms and coefficients on simplifying inequality \ref{eq:start}.

\begin{table}[H]
\begin{tabular}{|l|l|l|}
  \hline
   \textbf{Term}& \textbf{Coefficient}& {\bf Explicitly} \\
  \hline
  $n^2$ & $z_2$ & $p_1^2 - \seq{\mathbf{u_1}, \mathbf{u_1}}$ \\
   \hline
  $n$ & $z_1$ & $2p_1p_2 - 2\seq{\mathbf{u_1}, \mathbf{u_2}}$ \\
   \hline
   $1$ & $z_0$ & $p_2^2 - \seq{\mathbf{u_2}, \mathbf{u_2}} $ \\
  \hline
  $n\rho^n\cos (n\theta - \varphi)$ & $x_2$ & $2p_1p_3 - 2\seq{\mathbf{u_1}, \mathbf{u_3}}$ \\
   \hline
  $n\rho^n\sin (n\theta - \varphi)$ & $y_2$ & $2p_1p_4 - 2\seq{\mathbf{u_1}, \mathbf{u_4}}$ \\
   \hline
   $\rho^n\cos (n\theta-\varphi)$ & $x_1$ & $2p_2p_3 - 2\seq{\mathbf{u_2}, \mathbf{u_3}}$ \\
   \hline
  $\rho^n\sin (n\theta-\varphi)$ & $y_1$ & $2p_2p_4 - 2\seq{\mathbf{u_2}, \mathbf{u_4}}$ \\
   \hline
   $\rho^{2n}$ & $w$ & $\frac{1}{2}(p_3^2 + p_4^2) - \frac{1}{2}
 (\seq{\mathbf{u_3}, \mathbf{u_3}} + \seq{\mathbf{u_4}, \mathbf{u_4}})$ \\
  \hline
  $\rho^{2n}\cos (2n\theta - 2\varphi)$ & $x_0$ & $\frac{1}{2}(p_3^2 - p_4^2) - \frac{1}{2}(\seq{\mathbf{u_3}, \mathbf{u_3}} - \seq{\mathbf{u_4}, \mathbf{u_4}})$ \\
   \hline
  $\rho^{2n}\sin (2n\theta - 2\varphi)$ & $y_0$ & $2p_3p_4 - 2\seq{\mathbf{u_3}, \mathbf{u_4}}$ \\
  \hline
\end{tabular}
\end{table}

If $\rho < 1$, then the dominant growth rate for the problem to be non-trivial is $n^2, n, 1, $ or $\rho^{2n}$. The former cases can be solved with straightforward growth arguments, while the last case results in an order 3 LRS that can easily be dealt with \cite{ouaknine2014positivity,joeljames3}. We thus assume $\rho = 1$. Again, if $z_2 \ne 0$, then decidability is trivial because the dominant growth rate of $n^2$ is dictated by a single term; hence we assume $z_2 = 0$. In this case, there are two groups of terms, based on growth rate: one with $n$, the other with $1$. To study these groups, we define
\begin{align}
f(t) &= z_1 + x_2 \cos(t -\varphi) + y_2\sin(t - \varphi) \\
g(t) &= z_0 + w + x_1\cos(t - \varphi) + y_1\sin(t - \varphi) + x_0 \cos(2t - 2\varphi) + y_0\sin(2t-2\varphi)
\end{align}
Since $\theta$ is not a rational multiple of $2\pi$, $\{n\theta \text{ mod } 2\pi\}$ is dense in $[0, 2\pi]$, and we invoke Lemma \ref{eq:liminfmin} to deduce
\begin{equation}
\liminf_{n\in \naturals} f(n\theta) = \min_{t \in [0, 2\pi]} f(t) = z_1 -\sqrt{x_2^2 + y_2^2} = \mu.
\end{equation}
If $\mu < 0$, then the critical inequality $nf(n\theta) + g(n\theta) \ge 0$ will be violated infinitely often. If $\mu > 0$, we can compute an $N_2$ beyond which it is guaranteed to be satisfied. We thus concern ourselves with the case where $\mu = 0$. Recall the discussion around $\mu$ when its concept was first defined after Proposition \ref{prop:folklore}: it is an intrinsic property of the problem itself, and invariant under the ``phase'' $\varphi$ chosen in the basis of solutions. We thus assume that $\varphi$ is chosen in such a way that the minimum is attained at $\varphi$, i.e. $f(\varphi) = 0$. This choice can be made by applying the trigonometric identity $\cos(a - b) = \cos a \cos b + \sin a \sin b$ to $f(t)$. This means that $y_2 = 0$, and we choose $-z_1 = x_2 < 0$.

Now, if $g(\varphi) > 0$, we compute a positive lower bound on $f(t)$ for $t$ such that $g(t) < 0$. This then results in an $N_2$ beyond which $nf(n\theta) + g(n\theta) \ge 0$ is guaranteed. If $g(\varphi) < 0$, then the inequality has infinitely many violations. This is due to Lemma \ref{eq:quadraticdecay}, which asserts that there are infinitely many $n$ for which $f(n\theta) \le 2\pi^2z_1/5n^2$. These $n$ are necessarily such that $n\theta$ is close to $\varphi$, and the negativity of $g(n\theta)$ is thus decisive.

The final case that remains is $g(\varphi) = 0$. We argue that remarkably, it does not arise at all!
\begin{lemma}
If $z_2 = \mu=0$, it cannot be the case that $f(\varphi) = g(\varphi) = 0$.
\end{lemma}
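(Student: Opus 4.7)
The plan is to translate the three scalar hypotheses $z_2 = 0$, $f(\varphi) = 0$, $g(\varphi) = 0$ into algebraic identities involving the vectors $\mathbf{u_1}, \mathbf{u_2}, \mathbf{u_3}$ and the coefficients $p_1, p_2, p_3$, recognise the result as the equality case of the Cauchy--Schwarz inequality, and derive a non-trivial linear dependence among the $\mathbf{u_i}$ that contradicts the invertibility of $[\mathbf{b_1}\;\mathbf{b_2}\;\mathbf{b_3}\;\mathbf{b_4}]$.

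First, I would read off the table directly. Since $\varphi$ is chosen so that $y_2 = 0$, we have $f(\varphi) = z_1 + x_2$ and $g(\varphi) = z_0 + w + x_1 + x_0$. The condition $z_2 = 0$ immediately reads $p_1^2 = \langle \mathbf{u_1}, \mathbf{u_1}\rangle$, and $f(\varphi) = 0$ rearranges to $p_1(p_2 + p_3) = \langle \mathbf{u_1}, \mathbf{u_2} + \mathbf{u_3}\rangle$. For $g(\varphi)$ the key observation — which is the single piece of bookkeeping I would treat with care — is that when the table entries for $w$ and $x_0$ are summed, the contributions of $p_4^2$, $\|\mathbf{u_4}\|^2$, and their cross terms telescope out, and what remains collapses into a perfect square, yielding $g(\varphi) = (p_2 + p_3)^2 - \langle \mathbf{u_2} + \mathbf{u_3}, \mathbf{u_2} + \mathbf{u_3}\rangle$. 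So $g(\varphi) = 0$ reads $(p_2+p_3)^2 = \|\mathbf{u_2}+\mathbf{u_3}\|^2$.

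Next, I would combine these three identities. Multiplying the $z_2$ identity by the $g(\varphi)$ identity gives $p_1^2 (p_2+p_3)^2 = \|\mathbf{u_1}\|^2 \|\mathbf{u_2}+\mathbf{u_3}\|^2$, and squaring the $f(\varphi)$ identity gives $\langle \mathbf{u_1}, \mathbf{u_2}+\mathbf{u_3}\rangle^2 = p_1^2(p_2+p_3)^2$. Chaining these produces $\langle \mathbf{u_1}, \mathbf{u_2}+\mathbf{u_3}\rangle^2 = \|\mathbf{u_1}\|^2 \|\mathbf{u_2}+\mathbf{u_3}\|^2$, which is exactly the equality case of Cauchy--Schwarz for the pair $(\mathbf{u_1}, \mathbf{u_2}+\mathbf{u_3})$. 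Consequently these two vectors must be linearly dependent.

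Finally, I would rule out each possible form of that dependence. If $\mathbf{u_1} = 0$, the first row of $[\mathbf{b_1} \cdots \mathbf{b_4}]$ is zero; if $\mathbf{u_2} + \mathbf{u_3} = 0$, then $\mathbf{u_2}$ and $\mathbf{u_3}$ are themselves linearly dependent; and if $\mathbf{u_1} = \lambda(\mathbf{u_2} + \mathbf{u_3})$ for some $\lambda \neq 0$, then $\mathbf{u_1} - \lambda \mathbf{u_2} - \lambda \mathbf{u_3} = 0$ is a non-trivial relation among three rows. Every case contradicts the invertibility of the matrix whose rows are $\mathbf{u_1}^T, \ldots, \mathbf{u_4}^T$, so the scenario $z_2 = \mu = f(\varphi) = g(\varphi) = 0$ is impossible. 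The main (and really only) obstacle is the perfect-square simplification of $g(\varphi)$; the rest is a one-line Cauchy--Schwarz argument plugged into invertibility.
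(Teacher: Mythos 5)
Your proof is correct and follows essentially the same route as the paper: translate $z_2 = 0$, $f(\varphi)=0$, $g(\varphi)=0$ into the three identities $p_1^2 = \|\mathbf{u_1}\|^2$, $p_1(p_2+p_3) = \seq{\mathbf{u_1},\mathbf{u_2}+\mathbf{u_3}}$, $(p_2+p_3)^2 = \|\mathbf{u_2}+\mathbf{u_3}\|^2$, invoke the equality case of Cauchy--Schwarz, and contradict the linear independence of the rows of the invertible matrix. Your explicit verification that the $p_4$ and $\mathbf{u_4}$ contributions cancel in $w + x_0$ (yielding the perfect square for $g(\varphi)$), and your case analysis of the degenerate dependencies, are details the paper leaves implicit but are both accurate.
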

\begin{proof}
Suppose, for the sake of contradiction, the scenario actually occurs. This means that \\
$
z_2 = z_1 + x_2 = z_0 + w + x_1 + x_0 = 0.
$
From the table, these respectively imply
\begin{align*}
p_1^2 &= \seq{\mathbf{u_1}, \mathbf{u_1}}, \\
p_1(p_2 + p_3) &= \seq{\mathbf{u_1}, \mathbf{u_2} + \mathbf{u_3}}, \\
(p_2 + p_3)^2 &= \seq{\mathbf{u_2} + \mathbf{u_3}, \mathbf{u_2} + \mathbf{u_3}}.
\end{align*}
This implies that $|\seq{\mathbf{u_1}, \mathbf{u_2} + \mathbf{u_3}}| = ||\mathbf{u_1}||\cdot||\mathbf{u_2} + \mathbf{u_3}||$, i.e. $\mathbf{u_1}$ is a scaled multiple of $\mathbf{u_2} + \mathbf{u_3}$. This contradicts the fact that the rows of the invertible $\begin{bmatrix} \mathbf{b_1}& \dots & \mathbf{b_4}\end{bmatrix}$ are linearly independent, and we're done.
\end{proof}

\section{Non-uniform Robustness: Decidability at order four}
\label{section:decidability2}

In this section, we prove Theorem \ref{thm:decide2}. The techniques naturally apply to lower orders, and we omit their explicit treatment. Recall the critical condition from our overview in \S\ref{section:nonuniformoverview}:
\begin{equation}
\mu(\mathbf{c'}) = \liminf_{n\in \naturals}\seq{\mathbf{p'}, \mathbf{q_n}}_{dom} \ge 0
\end{equation}
for all $\mathbf{c'}$ in the neighbourhood is necessary for the decision to be YES; the inequality holding strictly is sufficient. Critical cases arise when the surface of the neighbourhood touches the region where $\mu = 0$, and the non-dominant terms, if any, can potentially have a negative contribution. We demonstrate that these can be detected and dealt with.

Since Proposition \ref{prop:folklore} guarantees the existence of a real positive dominant term, $\seq{\mathbf{p}, \mathbf{q_n}}_{dom}$ can only be of one of the following forms: {\bf(a)} $z$; {\bf(b)} $z + w(-1)^n$; {\bf(c)} $z + x\cos n\theta + y\sin n\theta$; {\bf(d)} $z + x\cos n\theta + y\sin n\theta + w(-1)^n$, where $x, y, z, w$ are linear in the initialisation $\mathbf{c}$. Cases (a), (b), and (c), (d) where $\theta$ is a rational multiple of $2\pi$ (detected with Lemma \ref{lemma:rootofunity}) are the easiest. The region $\mu \ge 0$ is carved out by \textit{finitely} many halfspaces, defined by separating hyperplanes of the form $z + bw + c_0 x + s_0 y = 0$. By elementary linear algebra and co-ordinate geometry (e.g.\. by working in a basis where the neighbourhood is a perfect hypersphere), one can determine whether $\mu > 0$ for the entire neighbourhood, or whether $\mu < 0$ for some points in the neighbourhood, or whether the neighbourhood touches a hyperplane. Each hyperplane has at most one point of tangency, whose algebraic coordinates can be solved for.  These critical points are low-dimensional instances of Ultimate Positivity, and can be decided with the state of the art \cite{ouaknine2014ultimate}.

We therefore assume that $\theta$ is not a rational multiple of $2\pi$, and we are in Case (c) or (d). We apply Lemma \ref{eq:liminfmin}, we get that 
\begin{equation}
\mu(\mathbf{c}) = \liminf_{n \in \naturals} \seq{\mathbf{p}, \mathbf{q_n}} = \min_{t \in \reals, b \in \{\pm 1\}} z + x\cos t + y\sin t + wb = z - \sqrt{x^2 + y^2} - |w|.
\end{equation}
If we are in Case (d), there are no dominant roots, and $\mu \ge 0$ throughout the neighbourhood is necessary as well as sufficient for the decision to be YES. This is an algebraic condition, and can be checked using the First Order Theory of the Reals.\footnotemark

\footnotetext{$|w|$ is expressed with $\exists r.~r^2 = w^2 \land r \ge 0$; a similar trick works for $\sqrt{x^2 + y^2}$.}

Case (c) remains. $\seq{\mathbf{q_n}, \mathbf{p}} = z + x\cos n\theta + y\sin n\theta + w\alpha^n$, where $0 < |\alpha| < 1$. As discussed, we can use the First Order Theory of the Reals to check the sufficient $\mu > 0$, and the necessary $\mu \ge 0$ throughout the neighbourhood. We consider the scenario where the necessity check succeeds, but the sufficiency check fails. The decision can be NO only if there are points on the surface of the neighbourhood where $\mu = 0$, and the non-dominant $w\alpha^n$ can make a negative contribution. We describe how these points are found and analysed. First, we observe that the region $\mu \ge 0$ is given by the cone $z - \sqrt{x^2 + y^2} \ge 0$. It can be intuited as being carved out by a continuum of hyperplanes $z + x\cos\phi + y\sin\phi = 0$. We encode the above discussion to find the critical points with the following first order formula with free variable $c$, which stands for $\cos \phi$
\begin{equation}
\label{eq:intersection}
\chi_1(c):= \exists s \exists \mathbf{c'}.~ (\mathbf{c'} - \mathbf{c})^T\mathbf{S}(\mathbf{c'} - \mathbf{c}) = 1 \land z' + cx' + sy' = 0 \land c^2 + s^2 = 1 \land w' \sim 0.
\end{equation}

In the above $\sim$ is $\ne$ if the non-dominant root $\alpha < 0$, and is $<$ if $\alpha > 0$. We can use Theorem \ref{thm:renegar} to get an equivalent quantifier free formula: this comprises purely of polynomial (in-)equalities in the free variable $c$. The set of $c$, and hence $\cos \phi$, satisfying these, consists of finitely many intervals. Of course, Ultimate Positivity is guaranteed when this set is empty: it means there are no points threatening to violate Ultimate Positivity.

We first dispose of the case where all intervals consist of single points. Consider an interval $\{c_0\}$ consisting of a single point. This is illustrated by the case of the ball touching the cone in Figure \ref{fig:geometricpicture}. Due to its origins and discrete occurrence, $c_0$ must be a root of a polynomial obtained by quantifier elimination on $\chi_1$, and is hence algebraic. The corresponding critical point is the point of tangency of the neighbourhood with a hyperplane with a real algebraic equation. Thus, it generates a real algebraic instance of Ultimate Positivity, which can be decided with the techniques of \cite{ouaknine2014ultimate}.

If, however, the set of $c$ satisfying $\chi_1$ consists of intervals that have more than one point, then the techniques of \cite{ouaknine2014ultimate} to decide Ultimate Positivity for a single point with algebraic coordinates are no longer accessible. This situation is illustrated by the case of the ball nestled in cone in Figure \ref{fig:geometricpicture}. Let $[\phi_1, \phi_2]$ be an interval of $\phi$ such that: a) all values of $c$ between $\cos\phi_1$ and $\cos\phi_2$ satisfy $\chi_1$, b) The corresponding witnesses $z'$ are at most $z_0$, and c) The corresponding witnesses $w'$ have magnitude at least some fixed $w_0$. Then, we must have for each $\phi$ (and corresponding $z(c), x = -cz, y = -z\sin \phi, w)$) in this interval, the following inequality is violated only finitely often:
\begin{equation}
z - z\cos(n\theta - \phi) + w\alpha^n \ge 0.
\end{equation}

We consider an even weaker inequality, which, in this context, we argue is bound to be violated infinitely often:
\begin{equation}
z_0[n\theta - \phi]_{2\pi}^2  \ge 2w_0\alpha^n.
\end{equation}

The argument hinges on Lemma \ref{lemma:existsreal}, which we restate:
\begin{lemma}
\label{lemma:existsreal3}
For every irrational number $x$, strictly decreasing real positive function $\psi$, and interval $\mathcal{I} = [a, b] \subset [0, 1], ~ a \ne b$, there exists $y_0 \in \mathcal{I}$ such that $[nx - y_0] < \psi(n)$ for infinitely many even $n$, and $y_1 \in \mathcal{I}$ such that $[nx - y_1] < \psi(n)$ for infinitely many odd $n$.
\end{lemma}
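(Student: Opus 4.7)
I will construct $y_0$ (the case $y_1$ being symmetric) as the common point of a nested family of closed subintervals $\mathcal{I} = I_0 \supset I_1 \supset I_2 \supset \cdots$, while simultaneously producing a strictly increasing sequence of even integers $n_1 < n_2 < \cdots$ with $\{n_k x\} \bmod 1 \in I_k$, arranged so that $[n_k x - y] < \psi(n_k)$ holds uniformly for every $y \in I_k$. Any point in $\bigcap_k I_k \subseteq \mathcal{I}$ then serves as $y_0$. I may assume $\psi(n) \to 0$: otherwise $\psi$ is bounded below by some constant $c > 0$, and the claim collapses to the weaker parity-constrained density statement established below.

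\textbf{Tools from continued fractions.} Let $p_k/q_k$ be the convergents of $x$ and set $\beta_k := q_k x - p_k$, so that $|\beta_k| < 1/q_{k+1}$ and $|\beta_k| \to 0$. The sole number-theoretic input I need from the Ostrowski framework is that infinitely many $q_k$ are odd. This is immediate from the parity form of the recurrence $q_{k+1} \equiv a_{k+1} q_k + q_{k-1} \pmod 2$ together with $q_{-1} = 0$, $q_0 = 1$: the state $(q_{k-1}, q_k) \equiv (0,0) \pmod 2$ is unreachable from $(0,1)$, so at least one of any two consecutive denominators is odd.

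\textbf{Parity-constrained density.} I assert that for every open interval $J \subseteq [0,1]$, every threshold $M \in \naturals$, and every choice of parity, there exists $n > M$ of that parity with $\{nx\} \in J$. By Weyl equidistribution, pick $n_0 > M$ with $\{n_0 x\}$ at distance at least $\delta > 0$ from $\partial J$; then select an odd $q_k > M$ with $1/q_{k+1} < \delta$. Since $\{(n_0 + q_k) x\}$ differs from $\{n_0 x\}$ by $|\beta_k| < \delta$, it still lies in $J$, while $n_0$ and $n_0 + q_k$ have opposite parities, so one of them meets the required parity.

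\textbf{Nested construction and main obstacle.} Set $I_0 := \mathcal{I}$ of length $L_0 > 0$. Inductively, given $I_k$ of length $L_k$, apply the previous paragraph with $J$ equal to the open middle third of $I_k$ and with a threshold on $n$ large enough that $\psi(n_{k+1}) < L_k/6$ (using $\psi \to 0$), obtaining an even $n_{k+1} > n_k$ with $\{n_{k+1} x\} \in J$. Let $I_{k+1}$ be the closed interval of length $2\psi(n_{k+1})$ centred at $\{n_{k+1} x\}$; then $I_{k+1} \subset I_k$ and $L_{k+1} \le L_k/3$, so $L_k \to 0$ and $\bigcap_k I_k = \{y_0\} \subseteq \mathcal{I}$, with $[n_k x - y_0] < \psi(n_k)$ holding automatically for every $k$. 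The main obstacle is parity-constrained density itself, because ordinary equidistribution of $\{nx\}$ is parity-blind; it is precisely the existence of infinitely many odd convergent denominators, the cleanest consequence of the continued fraction / Ostrowski framework invoked by the paper, that breaks this symmetry. Everything else is routine geometric packing.
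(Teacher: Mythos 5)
Your proof is correct, but it takes a genuinely different route from the paper's. The paper first proves the parity-free statement (Lemma \ref{lemma:existsreal2}) by working inside the Ostrowski numeration system: $y$ is written as $\sum_i b_i|\theta_{i-1}|$, the witnesses are the explicit integers $n_k = \sum_{i\le k} b_i(-1)^{i-1}q_{i-1}$, and $[n_kx-y]$ is bounded by the tail of that series; the parity constraint is then obtained by a substitution trick, applying the parity-free lemma to $2x$ with $\psi(2n)$ to get even indices and shifting the target by $x$ to get odd ones. You instead run a Cantor-style nested-interval construction that needs only the density of $(\{nx\})_n$ and the fact that $|q_kx-p_k|\to 0$, and you obtain the parity constraint directly by translating a good index $n_0$ by an odd convergent denominator $q_k$ --- whose existence infinitely often you correctly extract from the recurrence for $q_k$ modulo $2$. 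Your argument is more elementary and self-contained (no appeal to the existence and uniqueness of Ostrowski expansions or to the convergence of $\sum_i a_i|\theta_{i-1}|$), and the parity handling is arguably cleaner than the $2x$ substitution, which requires re-indexing $\psi$ and checking the irrationality of $2x$; the paper's version, in exchange, yields explicit arithmetic descriptions of $y$ and of the witnesses $n_k$, in keeping with its cited sources. Two cosmetic points: your assertion that $[n_kx-y]<\psi(n_k)$ holds uniformly for every $y\in I_k$ is only $\le$ at the endpoints of $I_k$, but since $y_0\in I_{k+1}$ and your middle-third choice places $I_{k+1}$ within distance $L_k/3<\psi(n_k)$ of the centre of $I_k$, the strict inequality does hold at $y_0$; and in the degenerate case $\psi\not\to 0$ you should say explicitly that one fixes any interior point of $\mathcal{I}$ as $y_0$ and applies your parity-constrained density claim to a small ball around it. Both are immediate, so there is no gap.
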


Now, if $\alpha < 0$, we use Lemma \ref{lemma:existsreal3} on the irrational $\theta/2\pi$, and the decreasing $\sqrt{\frac{w_0 |\alpha|^n}{2\pi^2z_0}}$ to argue that there exists a $\phi$ in the desired interval, such that the weaker inequality will be violated for infinitely many $n$ of the the appropriate parity. Thus, we can return NO if we are in the case where the set of $c$ satisfying $\chi_1$ (equation \ref{eq:intersection}) consists of intervals that contain more than a single point.

\section{Uniform Robustness: Hardness at order five}
\label{section:hardness}
We shall prove Theorem \ref{thm:hardness} in this section. That is, given $t \in \mathcal{T}, s$ as defined in equation \ref{eq:keyset}, we shall give rational $\mathbf{a}, \mathbf{c}$ such that varying $\mathbf{S} =\mathbf{G}^T\mathbf{G}$ while invoking $\mathbf{S}$-Robust Positivity decision procedures will enable us to approximate $L(t, s)$ and $L_\infty(t, s)$ to arbitrary precision.

We assume $t, s$ are specified by $\cos \theta, \cos \varphi \in \realalgebraics$, such that $2\pi t = \theta, 2\pi s = \varphi$. Our LRR $\mathbf{a}$ is such that the roots of the characteristic polynomial are $1, 1, 1, e^{2\pi it}, e^{-2\pi i t}$, i.e. the characteristic polynomial is 
$
(X- 1)^3(X^2 - 2X\cos\theta + 1)
$.

Here, 
$
u_n = \mathbf{e_1}^T\mathbf{A}^n\mathbf{c} = \seq{\mathbf{p}, \mathbf{q_n}}
$. For the problem instance we create in our reduction, we choose $\mathbf{p} = \begin{bmatrix}r & 0 & 1+\frac{r}{2} & -1 & 0 \end{bmatrix}^T$, where $r$ is a parameter we use to tune our guess for $L(t, s)$ and $L_\infty(t, s)$; we choose $\mathbf{q_n}^T = \mathbf{e_1}^T\mathbf{A}^n\mathbf{V} = \begin{bmatrix}n^2 & n & 1 & \cos(2\pi(nt-s)) & \sin(2\pi(nt-s))\end{bmatrix}$. We choose $\mathbf{S} = \mathbf{G}^T\mathbf{G}/r^2$, where $\mathbf{G} = \mathbf{V}^{-1}$. Thus, our critical inequality is
\begin{equation}
\label{eq:corereduction}
\seq{\mathbf{p}, \mathbf{q_n}} = \mathbf{e_1}^T\mathbf{A}^n\mathbf{c} \ge \max_{d \in \mathcal{B}_\mathbf{S}}  \mathbf{e_1}^T\mathbf{A}^n\mathbf{d} = ||r(\mathbf{e_1}^T\mathbf{A}^n\mathbf{G}^{-1})^T|| = r||\mathbf{q_n}||.
\end{equation}

Let $\Psi(n, r)$ denote the proposition $ \seq{\mathbf{p}, \mathbf{q_n}} \ge r||\mathbf{q_n}||$. Our reduction works by proving that for any guess $r>0$, given $\varepsilon>0$, we can compute an $N$ such that for all $n \ge N$
\begin{align}
\label{eq:property1}
&\Psi(n, r) \Rightarrow n[nt - s] > \frac{(1-\varepsilon)\sqrt{7r}}{4\pi}. \\
\label{eq:property2}
\neg &\Psi(n, r) \Rightarrow n[nt - s] < \frac{\sqrt{7r}}{(1-\varepsilon)4\pi}.
\end{align}
To compute $L_\infty(t, s) = \liminf_{n \in \naturals} n[nt-s]$ by increasingly precise approximations, we query Robust Uniform Ultimate Positivity: does $\Psi(n,r)$ hold for all but finitely many $n$? If the decision is YES, then we use property \ref{eq:property1} to argue that for any $\varepsilon$, $n[nt-s]$ exceeds $\frac{(1-\varepsilon)\sqrt{7r}}{4\pi}$ for all but finitely many $n$, hence $L_\infty(t, s)$ must be at least $\frac{\sqrt{7r}}{4\pi}$. Conversely, if the decision is NO, we use property \ref{eq:property2} to deduce that for any $\varepsilon$, $n[nt-s]$ falls short of $\frac{\sqrt{7r}}{(1-\varepsilon)4\pi}$ for infinitely many $n$, hence $L_\infty(t, s)$ must be at most $\frac{\sqrt{7r}}{4\pi}$.

By definition, $L(t,s) = \inf_{n\in\naturals_{>0}}n[nt-s]$. Given the guess $r$, precision $\varepsilon$, the corresponding $N$, and oracle access to whether $\Psi(n, r)$ holds for all $n \ge N$, it follows from properties \ref{eq:property1} and \ref{eq:property2} that we can resolve the dichotomy between $\inf_{n \ge N}n[nt-s] \ge \frac{(1-\varepsilon)\sqrt{7r}}{4\pi}$ and $\inf_{n \ge N}n[nt-s] \le \frac{\sqrt{7r}}{(1-\varepsilon)4\pi}$. By explicitly computing $n[nt-s]$ for the prefix $n < N$ to arbitrary precision, one has a procedure for approximating $L(t, s)$. We now explain how we use Robust Positivity as an oracle to decide whether $\Psi(n, r)$ holds for all $n \ge N$. Note that as it is, our query specifies a recurrence $\mathbf{a}$, an initialisation $\mathbf{c}$, a neighbourhood defined by $\mathbf{S}$ asks for the Robust Positivity of a \textit{suffix} of the sequence, as opposed to the entire sequence. We create a new instance with updated $\mathbf{c'}$ and $\mathbf{S'}$ to implement the shift:
\begin{align}
\forall n\ge N.~ \mathbf{e_1}^T\mathbf{A}^n\mathbf{c} \ge \max_{\mathbf{d} \in \mathcal{B}_\mathbf{S}}\mathbf{e_1}^T\mathbf{A}^n\mathbf{d} ~&\Leftrightarrow~ \forall n.~ \mathbf{e_1}^T\mathbf{A}^n(\mathbf{A}^N\mathbf{c}) \ge \max_{\mathbf{d} \in \mathcal{B}_\mathbf{S}}\mathbf{e_1}^T\mathbf{A}^n(\mathbf{A}^N\mathbf{d}) \\
&\Leftrightarrow~ \forall n.~ \mathbf{e_1}^T\mathbf{A}^n(\mathbf{c'}) \ge \max_{\mathbf{d} \in \mathcal{B}_\mathbf{S'}}\mathbf{e_1}^T\mathbf{A}^n(\mathbf{d'}).
\end{align}
It is clear that $\mathbf{c'} = \mathbf{A}^n\mathbf{c}$. Using the same reasoning as we did in the derivation of equation \ref{eq:bijectivemap}, we argue $\mathbf{S'} = (\mathbf{A}^{-N})^T\mathbf{S}\mathbf{A}^{-N}$. The reduction is thus complete, but for the proof of properties \ref{eq:property1} and \ref{eq:property2}.

By definition, $\Psi(n, r)$ holds if and only if $rn^2 + \frac{r}{2} + 1 - \cos 2(\pi(nt-s)) \ge r\sqrt{n^4 + n^2 + 2} $. Through elementary algebraic manipulations, we can alternately group the terms as
\begin{equation}
\label{eq:pivotal}
1 - \cos (2\pi (nt-s)) \ge \frac{r}{2}\left(\frac{7n^2 + 14}{(n^2 + \sqrt{n^4 + n^2 + 2})(n^2 +4+  \sqrt{n^4 + n^2 + 2})}\right) = r\cdot Q(n)
\end{equation}

We note that in the limit, the ratio of $Q(n)$ to $7/8n^2$ tends to $1$ from below. On the other hand, for small values of $x$, the expression $x^2/2$ is a close over-approximation for $1 -\cos x$. We capture the crucial interdependence in the following technical lemma.

\begin{lemma}
\label{lemma:numerical}
Let $r > 0$. For every $\varepsilon > 0$, we can compute $N$ such that
\begin{enumerate}
\item For all $n\ge N$, $Q(n) > {7(1-\varepsilon)^2}/{8n^2}$.
\item $1 - \cos x < 7r/{8N^2}  ~\Rightarrow~ 1- \cos x \ge (1 - \varepsilon)^2x^2/2$.
\end{enumerate}
\end{lemma}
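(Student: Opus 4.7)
The plan is to establish the two claims independently and then take $N := \max(N_1, N_2)$, where $N_1, N_2$ are the thresholds produced by each part. Both are quantitative versions of elementary asymptotic facts: $Q(n) \sim 7/(8n^2)$ as $n \to \infty$, and $1 - \cos x \sim x^2/2$ as $x \to 0$.

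For claim (1), I would abbreviate $S(n) := \sqrt{n^4 + n^2 + 2}$ and expand the denominator of $Q(n)$ to obtain
\begin{equation*}
Q(n) \;=\; \frac{7n^2 + 14}{2\bigl(2n^4 + 5n^2 + 2 + 2(n^2 + 2)\,S(n)\bigr)}.
\end{equation*}
The sandwich $n^2 \le S(n) \le n^2 + 1$ (the upper bound valid for $n \ge 1$) traps the denominator between two explicit quartics in $n$. Using $S(n) \ge n^2$ alone already forces $Q(n) < 7/(8n^2)$, confirming that $Q$ approaches the limit from below; using $S(n) \le n^2 + 1$ gives the matching lower bound $8n^2 Q(n)/7 \ge (8n^4 + 16n^2)/(8n^4 + 24n^2 + 10)$. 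The desired strict inequality $8n^2 Q(n)/7 > (1-\varepsilon)^2$ then reduces to a quadratic-in-$n^2$ condition that one can invert explicitly to produce $N_1$.

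For claim (2), the workhorse is the Taylor estimate $1 - \cos x \ge x^2/2 - x^4/24$, valid on all of $\reals$ (for $|x| \le \pi$ by the alternating-series error bound, and trivially for $|x| > \sqrt{12}$ where the right-hand side is negative). Dividing by $x^2/2$, the conclusion $1 - \cos x \ge (1-\varepsilon)^2 x^2/2$ reduces to the explicit cutoff $x^2 \le 12\varepsilon(2-\varepsilon)$. Writing $x_0 := \sqrt{12\varepsilon(2-\varepsilon)}$ and exploiting $2\pi$-periodicity to interpret $x$ via its principal representative in $(-\pi, \pi]$, it suffices to choose $N_2$ so that the hypothesis $1 - \cos x < 7r/(8N_2^2)$ forces $|x| \le x_0$. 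Since $1 - \cos$ is strictly increasing on $[0, \pi]$, this in turn is guaranteed by the closed-form requirement $7r/(8N_2^2) \le (1-\varepsilon)^2 x_0^2/2 = 6(1-\varepsilon)^2 \varepsilon (2-\varepsilon)$, which yields an explicit formula for $N_2$.

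The only real subtlety is the periodicity step in (2): the hypothesis $1 - \cos x < \delta$ does not by itself constrain $|x|$, so one must reduce modulo $2\pi$ before invoking the Taylor bound. In the intended applications (properties \ref{eq:property1} and \ref{eq:property2}), where $x = 2\pi(nt-s)$, this reduction is precisely the passage to the nearest-integer distance $[nt-s]$, so no information is lost. Aside from this, the proof is a routine bookkeeping of explicit constants.
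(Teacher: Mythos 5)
The paper states Lemma \ref{lemma:numerical} without proof, so there is no official argument to diverge from; your proposal correctly supplies the quantitative content of the two asymptotics the paper invokes informally ($Q(n)\to 7/8n^2$ from below, and $1-\cos x\ge x^2/2 - x^4/24$), and your observation that part (2) must be read modulo $2\pi$ --- i.e.\ with $x$ the principal representative, exactly as the paper applies it via $[2\pi(nt-s)]_{2\pi}$ --- is the right resolution of the one genuine subtlety. Two cosmetic remarks: expanding $2(n^2+S)(n^2+4+S)$ with $S\le n^2+1$ actually yields the denominator $8n^4+22n^2+12$ (indeed $Q(n)\ge 7/(8n^2+6)$ exactly, after the common factor $n^2+2$ cancels), so your displayed denominator $8n^4+24n^2+10$ is a slip, though the bound you wrote is still a valid, slightly weaker lower bound for $n\ge 1$ and the computation of $N_1$ goes through unchanged; and your case split for the Taylor inequality leaves the range $\pi<|x|\le\sqrt{12}$ formally uncovered, which is moot here since after reduction only $|x|\le\pi$ arises.
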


For some $r, \varepsilon$, let $N$ be computed by Lemma \ref{lemma:numerical}. Consider $n \ge N$. In case $\Psi(n, r)$ holds, property \ref{eq:property1} follows by considering the beginning and end of the chain of inequalities
\begin{equation}
{2\pi^2[nt-s]^2} = \frac{[2\pi(nt-s)]^2_{2\pi}}{2} \ge 1 - \cos(2\pi(nt-s)) \ge r\cdot Q(n) > \frac{7r(1-\varepsilon)^2}{8n^2}.
\end{equation}
Similarly, if $\neg\Psi(n, r)$ holds, we can use Lemma \ref{lemma:numerical} to construct the chain
\begin{equation}
2\pi^2(1-\varepsilon)^2 [nt-s]^2= \frac{(1-\varepsilon)^2 [2\pi(nt-s)]_{2\pi}^2}{2} \le 1 - \cos(2\pi(nt-s)) < r\cdot Q(n) < \frac{7r}{8N^2}.
\end{equation}

\section{Extensions and Perspective}
\label{section:perspective}
We note that our techniques for $\mathbf{S}$-Robust Non-uniform Ultimate Positivity hinge on the First Order Theory of the Reals. Observe that this was rather agnostic to the exact shape of the neighbourhood: we can easily extend the same techniques to arbitrary \textit{semi-algebraic} neighbourhoods. 

As outlined at the outset, we contributed towards a sharp and comprehensive picture of what is \textit{decidable} about Robust Positivity Problems for real algebraic Linear Recurrence Sequences. We find it remarkable that number-theoretic analyses involving Diophantine approximation, which usually show up in the context of hardness, also play a significant role in our \textit{decidability} proofs! However, a rather conspicuous gap in our picture is the status of $\mathbf{S}$-Robust Non-uniform Ultimate Positivity at order $5$: this seems to require even more delicate analysis. 

An obvious, but possibly tedious future direction would be to tie up the book-keeping loose ends, and meticulously account for the complexity of our techniques. We chose to work with algebraic numbers; in settings involving rational numbers where scaling to integers and accessing an $\mathsf{PosSLP}$ oracle is viable, the complexity usually lies in $\mathsf{PSPACE}$. However, this might blow up significantly in the absence of efficient positivity testing for a different class of arithmetic circuit.

At a higher level, we note that we chose our norm to be based on the standard matrix inner product. It is interesting to investigate what kinds of decidability and hardness results hold for neighbourhoods specified using different norms. Perhaps, results could be universal across a wider class of norms, and there could be a profound underlying linear-algebraic reason whose discovery would be mathematically significant.

In the grand Formal Methods scheme, the study of Hyperproperties \cite{hyperproperties} is an exciting natural way robustness problems for Linear Dynamical Systems could fit in. Hyperlogics reason about sets of traces of an infinite time system, rather than a single trace. They gained importance as a means to verify security in view of attacks like Meltdown and Spectre. A quintessential hyperproperty, for instance, would specify a reasonable notion of \textit{indistinguishability} of traces. In that regard, our notions of $\mathbf{S}$-Robust Positivity and $\mathbf{S}$-Robust Uniform Ultimate Positivity bear striking resemblance. Exploring deeper connections is a fascinating future research avenue.

\clearpage
\bibliography{main}
\clearpage

\appendix
\section{Appendix: Notation and Prerequisites}
\label{appendix:prelims}
For the purposes of discussing robustness, we shall use $\ball$ to denote the unit Euclidean ball in $\reals^\kappa$, centred at the origin. Similarly, we use $\ball_{\mathbf{S}}$ to denote the set of $\mathbf{d}$ such that $\mathbf{d}^T\mathbf{Sd} \le 1$. For real column vectors $\mathbf{x}, \mathbf{y}$, we use $\seq{\mathbf{x}, \mathbf{y}}$ to denote the inner product $\mathbf{x}^T\mathbf{y} = \mathbf{y}^T\mathbf{x}$. The notation $||\mathbf{x}||$ denotes the standard $\ell^2$-norm $\sqrt{\seq{\mathbf{x}, \mathbf{x}}}$.

Throughout this paper,  $\naturals$, $\integers$, $\rationals$, $\reals$, and $\complexes$ respectively denote the natural numbers, integers, rationals, reals, and complex numbers. $\alpha \in \complexes$ is said to be algebraic if it is a root of a polynomial with integer coefficients. Algebraic numbers form an algebraically closed field, denoted by $\algebraics$. We denote the field of real algebraic numbers by $\realalgebraics$.

This Appendix contains a brief initiation to this number field $\realalgebraics$ and $\algebraics$. The key takeaways are that the usual arithmetic as well as polynomial root computation can be carried out with perfect precision, and that the First Order Theory of the Reals $\seq{\reals; +, \cdot, \ge, 0, 1}$ is a decidable logical system powerful enough to fit our purposes.

\subsection{Algebraic Numbers: Arithmetic}
For an algebraic number $\alpha$, its defining polynomial $p_\alpha$ is the unique polynomial in $\integers[X]$ of least degree such that the GCD of its coefficients is $1$ and $\alpha$ is one of its roots.
Given a polynomial $p \in \integers[X]$, we denote the length of its representation by $\text{size}(p)$; its height, denoted by $H(p)$, is the maximum absolute value of the coefficients of $p$; $d(p)$ denotes the degree of $p$. The height $H(\alpha)$ and degree $d(\alpha)$ of $\alpha$ are defined to be the height and degree of $p_\alpha$.

For any $p \in \integers[X]$, the distance between distinct roots is effectively lower bounded in terms of its degree and height \cite{mignottecon}.
This bound allows one to represent an algebraic number $\alpha$ as a 4-tuple $(p,a,b,r)$ where $p$ is the defining polynomial, and $a+bi$ is a rational approximation of sufficient precision $r\in\rationals$. We use $\text{size}{\alpha}$ to denote the size of this representation, i.e., number of bits needed to write down this 4-tuple.

Given a polynomial $p\in \integers[X]$, one can compute its roots in polynomial time \cite{findroots1operate1}. Recently, implementations of algorithms to factor polynomials in $\algebraics[X]$ have been verified \cite{factor-algebraic}. Given $\alpha$, $\beta$ two algebraic numbers, one can always compute the representations of $\alpha+\beta$, $\alpha\beta$, $\frac 1 \alpha$, $\Re(\alpha)$, $\Im(\alpha), |\alpha|$, and decide $\alpha = \beta$, $\alpha > \beta$ in polynomial time wrt the size of their representations. \cite{findroots1operate1,findroots2operate2}.

\subsection{First Order Theory of the Reals}
This logical theory reasons about the universe of real numbers, and is denoted $\seq{\reals; +, \cdot, \ge, 0, 1}$. That is, variables take real values; terms can be added and multiplied, we have the comparison predicate, and direct access to the constants $0$ and $1$. Thus, our propositional atoms are inequalities involving polynomials with integer coefficients. With existential quantifiers and polynomials, we can thus express algebraic constants too. Formally, we have access to only the existential quantifier, negation, and disjunction; however, this can express the universal quantifier and all other Boolean connectives as well.

Variables are either quantified or free. Remarkably, the First Order Theory of the Reals admits quantifier elimination: for any formula $\chi(\mathbf{x})$, whose free variables are $\mathbf{x}$, there exists an \textbf{equivalent} formula $\psi(\mathbf{x})$ that does not contain any quantified variables. The following result is relevant to us.
\begin{theorem}[Renegar \cite{renegar}]
\label{thm:renegar}
Let $M \in \naturals$ be fixed. Let $\chi(\mathbf{x})$ be a formula with fewer than $M$ variables in total. There exists a procedure that returns an equivalent quantifier-free formula $\psi(\mathbf{x})$ in disjunctive normal form. This procedure runs in time polynomial in the size of the representation of $\chi$.
\end{theorem}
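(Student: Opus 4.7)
The plan is to reduce general quantifier elimination to parametric sample-point computation and then observe that the bound $M$ on the total number of variables converts what would otherwise be a doubly exponential blowup into a polynomial one. First I would place $\chi$ in prenex normal form $Q_1 \mathbf{y}_1 \cdots Q_w \mathbf{y}_w\, \phi(\mathbf{x}, \mathbf{y}_1, \ldots, \mathbf{y}_w)$, where $\phi$ is a Boolean combination of atomic formulae $p_i \sim_i 0$ with $p_i \in \integers[\mathbf{x}, \mathbf{y}]$. Because every variable of $\chi$ is counted by $M$, the lengths $|\mathbf{x}|, |\mathbf{y}_1|, \ldots, |\mathbf{y}_w|$ and the number $w$ of alternations are all bounded by the fixed constant $M$; this observation is the linchpin of the complexity analysis.

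The core subroutine is the elimination of a single existential block $\exists \mathbf{y}.~\phi(\mathbf{z}, \mathbf{y})$ with $|\mathbf{y}| \le M$, producing an equivalent quantifier-free formula in the parameters $\mathbf{z}$. The tool I would invoke is the critical-point method of Grigor'ev--Vorobjov, sharpened in Renegar's papers: for each realisable sign condition on the atomic polynomials $\{p_1, \ldots, p_s\}$, one builds a finite polynomial system (via Lagrange multipliers applied to a suitable distance function) whose solutions hit every connected component of the corresponding realisation set. To guarantee smoothness and transversality, the polynomials are first perturbed by infinitesimals in a real closed field extension, with sign conditions at the sample points encoded by Thom codes. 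The number of sample points is bounded by a product of degrees, of order $(sd)^{O(M)}$, and their algebraic description has bit-size polynomial in the input. Making the construction parametric in $\mathbf{z}$ yields a decomposition of $\reals^{|\mathbf{z}|}$ into cells on which the truth value of $\exists \mathbf{y}.~\phi$ is constant; this value is tested symbolically by substituting the algebraic sample points into $\phi$.

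Iterating the parametric elimination over the $w$ quantifier blocks from innermost to outermost nests the constructions and gives overall complexity of the shape $(sd)^{O(M)^{w}}$. Since $M$ and $w \le M$ are fixed, the exponent is a constant, so the algorithm runs in time polynomial in $s$, $d$, and the bit-size of coefficients, hence polynomial in the size of $\chi$. The output is already a Boolean combination of polynomial (in)equalities, and converting it to disjunctive normal form multiplies the description size by a factor depending only on $M$, preserving the polynomial bound. The main obstacle in turning this sketch into a full proof is the parametric critical-point step: rigorously controlling the bit-complexity of the algebraic sample points under infinitesimal perturbation, and verifying that sign evaluation along the nested cell decompositions remains within the claimed bound, are precisely where Renegar's original argument invests its substantial technical work.
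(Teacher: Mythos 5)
This statement is not proved in the paper at all: it is quoted as a known result and attributed to Renegar \cite{renegar}, so there is no in-paper argument to compare against. Your sketch follows the standard route that underlies the cited theorem (prenex form, parametric elimination of one quantifier block via the critical-point method with infinitesimal perturbation and Thom encodings, iteration over the blocks, and the observation that a fixed bound $M$ on the total number of variables turns the $(sd)^{O(M)^{w}}$-type exponent into a constant, hence polynomial time). At the level of a sketch this is accurate, and you correctly identify that the real content --- the bit-complexity control of the parametric sample points under perturbation --- is exactly what Renegar's paper supplies; your text invokes it rather than proves it, so what you have is an outline of the literature result, not a self-contained proof, which is acceptable here given that the paper itself only cites it.

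One local fix is needed: your justification of the disjunctive normal form output is wrong as stated. Converting an arbitrary Boolean combination to DNF can blow up exponentially in the number of atomic predicates, and that number in the eliminated formula is polynomial in the input size, not bounded by a function of $M$ alone, so a ``conversion factor depending only on $M$'' is not available. The correct observation is that the parametric critical-point construction already yields the answer as a disjunction, over the polynomially many ($(sd)^{O(M)}$) realisable sign conditions at the sample points, of conjunctions of polynomial (in)equalities in the free variables; that is, the output is in DNF by construction, with polynomially many disjuncts, which is what the theorem asserts and what the paper uses when it eliminates quantifiers from the formula $\chi_1$ in Section 6.
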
 
\section{Appendix: Ostrowski Numeration System}
\label{appendix:ostrowski}

In this appendix, we prove Lemma \ref{lemma:existsreal}. We state number-theoretic properties of the continued fraction representation and Ostrowski Numeration System without proof. We refer the reader to \cite{bourla2016ostrowski} for a more detailed exposition, and we closely follow the discussion surrounding \cite[Propositions 1.1, 2.1]{berthe2022dynamics} in our own proof. We first prove a slightly simpler statement.

\begin{lemma}
\label{lemma:existsreal2}
For every irrational number $x$, strictly decreasing real positive function $\psi$, and interval $\mathcal{I} = [\alpha, \beta] \subset [0, 1], ~ \alpha \ne \beta$, there exists $y \in \mathcal{I}$ such that $[nx - y] < \psi(n)$ for infinitely many $n$.
\end{lemma}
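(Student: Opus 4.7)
The plan is to realize $y$ as the unique point common to a nested sequence of closed sub-intervals $\mathcal{I} =: \mathcal{I}_0 \supset \mathcal{I}_1 \supset \mathcal{I}_2 \supset \cdots$ paired with a strictly increasing sequence $n_1 < n_2 < \cdots$ of positive integers, where by construction every $y' \in \mathcal{I}_k$ satisfies $[n_k x - y'] < \psi(n_k)$. Compactness then guarantees $\bigcap_k \mathcal{I}_k \ne \emptyset$, and any point in this intersection yields a $y \in \mathcal{I}$ witnessed by the infinite sequence $n_1, n_2, \dots$.

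The central ingredient is the classical density of $(\{nx\})_{n \in \naturals}$ in $[0,1)$ for irrational $x$, made quantitative by the continued fraction convergents $p_k/q_k$ of $x$, which satisfy $[q_k x] < 1/q_{k+1}$. Given $\mathcal{I}_{k-1}$ of length $L_{k-1} > 0$, density furnishes infinitely many $n > n_{k-1}$ with $\{nx\}$ in the open middle third of $\mathcal{I}_{k-1}$; among these I further require $\psi(n_k) < L_{k-1}/3$, possible whenever $\psi(n) \to 0$. (If instead $\psi(n) \ge c > 0$ for all $n$, the conclusion is immediate from density alone, since for any $y \in \mathcal{I}$ one has $[nx - y] < c$ for infinitely many $n$.) Setting
\[
\mathcal{I}_k := \bigl[\,\{n_k x\} - \psi(n_k)/2,\ \{n_k x\} + \psi(n_k)/2\,\bigr]
\]
produces an interval inside $\mathcal{I}_{k-1}$ on which every $y'$ satisfies $|y' - \{n_k x\}| \le \psi(n_k)/2 < \psi(n_k)$, and hence $[n_k x - y'] < \psi(n_k)$ since the nearest integer to $n_k x - y'$ is the same as that to $n_k x$ when $|\{n_k x\} - y'| < 1/2$.

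The main obstacle I anticipate is not any individual step but the joint satisfiability of the three inductive constraints on $n_k$: strict monotonicity in $k$, placement of $\{n_k x\}$ in a prescribed subinterval, and smallness of $\psi(n_k)$ relative to $L_{k-1}$. This is precisely where the Ostrowski numeration system named in the appendix title earns its keep: writing each positive integer uniquely in the digits $\sum b_j q_j$ prescribed by the continued fraction of $x$ converts the abstract density appeal into an explicit description of where $\{nx\}$ lands modulo $1$, in terms of signs and magnitudes of $q_j x - p_j$. This added control also powers the subsequent strengthening to fixed parities of $n$ in Lemma \ref{lemma:existsreal}, by tracking the parities of the Ostrowski digits against those of the denominators $q_j$ (equivalently, exploiting that $2x$ is itself irrational so that both $(\{2mx\})_{m}$ and $(\{(2m{+}1)x\})_{m}$ are dense).
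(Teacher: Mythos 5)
Your proof is correct, but it follows a genuinely different route from the paper's. The paper constructs $y$ \emph{explicitly} through its Ostrowski digits $b_i$: writing $y = \sum_i b_i|\theta_{i-1}|$ and $n_k = \sum_{i\le k} b_i(-1)^{i-1}q_{i-1}$ yields the exact identity $[n_k x - y] = \sum_{i>k} b_i|\theta_{i-1}| < c/\phi^k$, and the digits are then chosen (padding with zeros to hold $n_k$ fixed while $c/\phi^k$ shrinks) so that this tail drops below $\psi(n_k)$ infinitely often. You instead run a soft nested-intervals argument: closed intervals $\mathcal{I}_k$ of length $\psi(n_k)$ centred at $\{n_k x\}$, shrinking inside one another, with $y$ obtained from compactness. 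Your argument needs only the classical density of $(\{nx\})_n$ for irrational $x$ (which is independent of this lemma, so there is no circularity with Lemma \ref{lemma:density}), plus your correct case split on whether $\inf_n\psi(n)=0$; the containment $\mathcal{I}_k\subset\mathcal{I}_{k-1}$ and the bound $[n_kx-y']\le|\{n_kx\}-y'|<\psi(n_k)$ both check out. What each approach buys: yours is more elementary and more general (it works verbatim for any sequence dense in $[0,1)$, not just irrational rotations, and dispenses with continued fractions entirely), while the paper's gives explicit digit-level control that it reuses elsewhere. One correction to your closing paragraph: the ``joint satisfiability'' of your three inductive constraints is not actually an obstacle requiring Ostrowski machinery---density supplies infinitely many candidates $n$ landing in the middle third, and all but finitely many of them satisfy $n>n_{k-1}$ and $\psi(n)<L_{k-1}/3$ once $\psi\to 0$, exactly as you argued two paragraphs earlier. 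Your proof is already complete without the numeration system; and your observation that the parity refinement of Lemma \ref{lemma:existsreal} follows by passing to $2x$ matches the paper's own derivation.
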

\begin{proof}
Without loss of generality, we can assume that $x \in (0, 1)$. Consider the continued fraction representation of $x$: $[0; a_1, a_2, a_3, \dots]$
$$
x = \cfrac{1}{a_1 + \cfrac{1}{a_2 + \cfrac{1}{a_3 + \cfrac{1}{\ddots}}}}
$$
where $a_1, a_2, a_3, \dots \in \naturals$. Let the rational approximation of $x$ obtained by truncating the expansion at the $k^{th}$ level be $\frac{p_k}{q_k}$, i.e. $\frac{p_1}{q_1} = \frac{1}{a_1}$, and so on. Let $\theta_k = q_k x -p_k$. We have that $|\theta_k| = (-1)^k\theta_k$. It is well known that $|\theta_k| < 1/q_k$. We define $q_{-1} = p_0 := 0$, and $p_{-1} = q_0 := 1$, so that for $k \ge 1$, the following recurrences hold:
$$
p_k = a_kp_{k-1} + p_{k-2}, ~ q_k = a_kq_{k-1} + q_{k-2}
$$
We thus have that $q_k \ge \left(\frac{1 + \sqrt{5}}{2}\right)^k = \phi^k$.

\begin{proposition}[\cite{berthe2022dynamics}]
\label{prop:absconv}
Let irrational $x$ and its continued fraction representation $[0; a_1, a_2, a_3, \dots]$ be as above. The infinite series 
$$
\sum_{i=1}^\infty a_i |\theta_{i-1}|
$$
converges.
\end{proposition}

\begin{proposition}[Ostrowski Numeration System, \cite{berthe2022dynamics}]
\label{prop:numsys}
Every real number $y \in [0, 1)$ can be written uniquely in the form
$$
y = \sum_{i=1}^\infty b_i |\theta_{i-1}| = \sum_{i=1}^\infty (-1)^{i-1}b_i \theta_{i-1} 
$$
where $b_i \in \naturals$ $b_i \le a_i$ for all $i \ge 1$. If for some $i$, $a_i = b_i$, then $b_{i+1} = 0$. $a_i \ne b_i$ for infinitely many odd, and infinitely many even indices $i$.
\end{proposition}

We prove Lemma \ref{lemma:existsreal2} by using the free choice of $b_i$ in this system to construct appropriate $y$. We first handle the issue of placing $y$ in the correct interval $[\alpha, \beta]$. Let $\beta - \alpha = \delta$. We use Proposition \ref{prop:absconv} to argue that there exists a suffix of the infinite series, such that changing the suffix does not change the real number it represents by more than $\delta/2$. Then, we can simply fix the corresponding prefix of $(\alpha + \beta)/2$ to be the prefix of $y$.

Once this prefix is locked in, our strategy is to set $b_i$ to $0$ in even positions, and $1$ in some odd positions, to ensure that for sufficiently large $k$, $n_k = \sum_{i=1}^k b_i (-1)^{i-1}q_{i-1}$ is positive, and increasing in $k$.

Now, notice that since $b_i, p_i$ are all integers, for any $y$,
\begin{align*}
[n_kx - y] &= \left[\sum_{i=1}^k b_i (-1)^{i-1}q_{i-1}x - \sum_{i=1}^k b_i (-1)^{i-1}p_{i-1} - y\right] \\
&= \left[\sum_{i=1}^k b_i (-1)^{i-1}\theta_{i-1} - y\right] \\
&= \left[- \sum_{i=k+1}^\infty b_i (-1)^{i-1}\theta_{i-1} \right] = \sum_{i=k+1}^\infty b_i |\theta_{i-1} |\\
&< \sum_{i=k+1}^\infty b_i \frac{1}{q_{i-1}} \le \sum_{i=k+1}^\infty b_i \frac{1}{\phi^{i-1}} \le \frac{c}{\phi^k}
\end{align*}

Note that the last constant $c$ can be set independently of the choice of which $b_i$ are $1$, and which are $0$: it comes from the convergence of the geometric sum. We now make the choice of where to set $b_i=1$. To conclude the proof, we shall show that given a decreasing function $\psi$, we can ensure that for infinitely many distinct $n_k$, 
$$[n_k x - y] < \frac{c}{\phi^k} \le \psi(n_k) = \psi\left(\sum_{i=1}^k b_i (-1)^{i-1}q_{i-1}\right)$$.

The first inequality is guaranteed. Suppose the second inequality does not hold. Then, from $i = k$ onwards, we keep assigning $b_i := 0$. This holds $n_k$ constant as $k$ increases, but decreases $\frac{c}{\phi^k}$. Eventually, the second inequality will indeed hold. After this point, for the next odd $i$, we can set $b_i$ to $1$, and get a new $n_k$. We continue this ad infinitum, and we are done.
\end{proof}

Now, to get infinitely many \textit{even} $n$, apply Lemma \ref{lemma:existsreal2} with $2x$, $[a, b]$, $\psi_0(n) = \psi(2n)$. For some choice of $y$, there will be infinitely many $n$ such that $[2nx - y] < \psi_0(n) = \psi(2n)$. To get infinitely many \textit{odd} $n$, we can take a subset of the interval, and shift it by $x$. Take $\psi_1(n) = \psi(2n+1)$. For some choice of $y - x$, there will be infinitely many $n$ such that $[n(2x) - (y-x)] = [(2n+1)x - y] < \psi_1(n) = \psi(2n+1)$.

\end{document}